\documentclass[runningheads]{llncs}
\usepackage[T1]{fontenc}
\usepackage{graphicx}
\usepackage{amsmath,amsfonts}
\usepackage{paralist}
\usepackage{enumerate}
\usepackage{color}
\usepackage{mathtools}
\usepackage{tabularx,booktabs}
\usepackage{multirow}
\usepackage{siunitx}
\usepackage{xcolor}
\usepackage{subcaption}
\usepackage{stmaryrd}
\usepackage{algorithm,algpseudocode,algorithmicx}
\algrenewcommand\algorithmicrequire{\textbf{Input:}}
\algrenewcommand\algorithmicensure{\textbf{Output:}}
\usepackage[graphicx]{realboxes}

\usepackage[utf8]{inputenc}
\usepackage{marvosym}
\usepackage[hidelinks]{hyperref}

\newcommand{\orcidlink}[1]{\href{https://orcid.org/#1}{\includegraphics[height=0.8em]{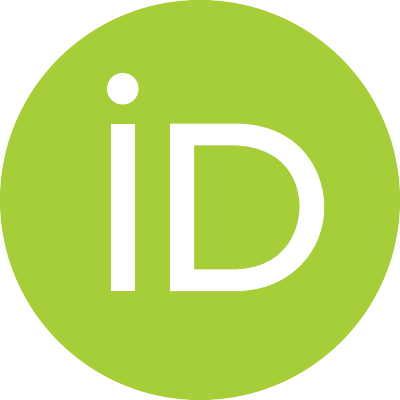}}}

\newlength{\subcolumnwidth}
\newenvironment{subcolumns}[1][0.45\columnwidth]
 {\valign\bgroup\hsize=#1\setlength{\subcolumnwidth}{\hsize}\vfil##\vfil\cr}
 {\crcr\egroup}
\newcommand{\nextsubcolumn}[1][]{%
  \cr\noalign{\hfill}
  \if\relax\detokenize{#1}\relax\else\hsize=#1\setlength{\subcolumnwidth}{\hsize}\fi
}
\newcommand{\nextsubfigure}{\vfill}

\begin{document}
\title{Supermartingale Certificates for \\ Parametric MDPs}
%
%
\author{Kaushik Mallik\inst{1}\textsuperscript{\Letter}\orcidlink{0000-0001-9864-7475} \and {\DJ}or{\dj}e \v{Z}ikeli\'{c} \inst{2}\textsuperscript{\Letter}\orcidlink{0000-0002-4681-1699}}
\authorrunning{Kaushik Mallik and {\DJ}or{\dj}e \v{Z}ikeli\'{c}}
%
\institute{IMDEA Software Institute, Madrid, Spain \\
\email{kaushik.mallik@imdea.org} \\
\and Nanyang Technological University, Singapore, Singapore \\
\email{djordje.zikelic@ntu.edu.sg}}
\maketitle              
\begin{abstract}
We consider the problems of formal verification and synthesis in parametric Markov decision processes (MDPs) with \textit{general} measurable state and action spaces.
The heart of our approach is a parameter flattening transformation, which allows us to transform parametric MDPs into semantically equivalent non-parametric MDPs.
Building on this transformation, we introduce the novel notion of \textit{parametric} supermartingale certificates, which generalize the traditional supermartingale certificates---used for non-parametric MDPs---to the parametric setting. 
We use our parametric supermartingale certificates to design algorithms for verification and approximate synthesis in polynomial arithmetic parametric MDPs.
This leads to the first verification and synthesis algorithms for parametric MDPs with \textit{general} state and action spaces. 
We implement our algorithms and experimentally evaluate them on several continuous parametric random walk benchmarks.


\keywords{MDPs \and Parameter synthesis \and Supermartingales.}
\end{abstract}

\newcommand{\MC}{M}
\newcommand{\MDP}{M}
\newcommand{\States}{S}
\newcommand{\StatesSA}{\mathcal{F}_S}
\newcommand{\Actions}{A}
\newcommand{\ActionsSA}{\mathcal{F}_A}
\newcommand{\strategy}{\sigma}
\newcommand{\kernel}{P}
\newcommand{\Init}{\textsf{Init}}
\newcommand{\support}{\textsf{support}}
\newcommand{\Path}{\textsf{Path}}
\newcommand{\FPath}{\textsf{FPath}}
\newcommand{\ParamSet}{U}
\newcommand{\Reach}{\textsf{Reach}}
\newcommand{\Safe}{\textsf{Safe}}
\newcommand{\StrategySpace}{\textsf{Str}}
\newcommand{\Uniform}{\textsf{Uniform}}
\newcommand{\Verify}{\mathtt{Verify}}
\newcommand{\volume}{\textsf{volume}}

\renewcommand{\diamondsuit}{\heartsuit}

\newcommand{\angelsub}{\mathtt{VerifyAngelic}}
\newcommand{\demonsub}{\mathtt{VerifyDemonic}}
\newcommand{\angelR}{\underline{R}}
\newcommand{\demonR}{\underline{R'}}
\newcommand{\sat}{\mathsf{SAT}}
\newcommand{\unsat}{\mathsf{UNSAT}}
\newcommand{\unknown}{\mathsf{UNKNOWN}}
\newcommand{\grid}[1]{\llbracket #1\rrbracket}


\section{Introduction}\label{sec:intro}

Markov decision processes (MDPs) are a standard model for sequential decision making under probabilistic uncertainty~\cite{Puterman94}. A central assumption underlying classical MDPs and their probabilistic model checking algorithms is that all transition probabilities are exactly known. In practice, however, transition probabilities often depend on quantities such as failure rates, packet loss probabilities, or environmental conditions, whose values are rarely known precisely~\cite{JansenJK22}.

Parametric Markov decision processes (pMDPs) address this limitation by allowing transition probabilities to depend on finitely many real-valued parameters~\cite{JungesAHJKQV24}. The values of parameters may be unknown or only partially known, for instance when they are constrained to lie within given bounds~\cite{GivanLD00}. As a result, pMDPs can model uncertainties in probabilistic dynamics, including complex dependencies such as multivariate polynomial transition probabilities~\cite{JansenJK22,JungesAHJKQV24}.

Substantial progress has been made in the probabilistic model checking of parametric Markov chains and finite-state pMDPs~\cite{Daws04,JansenCVWAKB14,DehnertJJCVBKA15,QuatmannD0JK16,BaierHHJKK20,HahnHZ11,Cubuktepe0JKPPT17,CubuktepeJJKT22}. The core problems studied in this setting are parameter \textit{verification}---establishing that a specification holds for all parameter valuations in a given region---and parameter \textit{synthesis}---computing regions of parameter valuations for which the specification is satisfied~\cite{JungesAHJKQV24}. Due to the inherent complexity of exact synthesis, approximate synthesis variants are also considered.

Most existing methods are restricted to pMDPs with \textit{finite} state and action spaces, whereas \textit{in}finite space models are common in many applications, such as stochastic control and robotics~\cite{fleming2012deterministic}, and probabilistic programs with unbounded or continuous state spaces~\cite{BKS2020}. Addressing this gap is our main motivation.

\smallskip\noindent{\bf Our contributions.}
We initiate the study of automated verification and synthesis for pMDPs with \textit{general} measurable state and action spaces, by using supermartingale certificates. Supermartingale certificates are locally checkable proof rules with strong theoretical foundations in probability theory~\cite{williams1991probability}, that~were successfully applied to a wide range of properties in non-parametric stochastic systems, including reachability~\cite{ChakarovS13,ChatterjeeFNH18,ChatterjeeFG16,McIverMKK18,AbateGR20,ChatterjeeGMZ22,LechnerZCH22,ChatterjeeGNZZ23,MajumdarS25,KuraUT26}, safety~\cite{PrajnaJP07,SteinhardtT12,ChatterjeeNZ17,TakisakaOUH21,ZhiWLOZ24,WangS0CG21,AbateEGPR23,WangYFLO24}, reach-avoidance~\cite{XueLZF21,ZikelicLHC23,ZikelicLVCH23,Xue24}, and $\omega$-regular specifications \cite{AbateGR24,HenzingerMSZ25,AbateGR25,KuraU26,AbateGIR26}. However, they have not been considered in the parametric setting.

Our first contribution is the {\em parameter flattening transformation} of pMDPs that embeds parameter valuations into the state space, allowing us to transform pMDPs into semantically equivalent non-parametric MDPs. The transformation is applicable to pMDPs with general measurable state and action spaces, that may be both finite and infinite sets. This transformation, that has previously not been explored in the literature, opens an opportunity to bring the techniques for formal verification of non-parametric MDPs to the setting of  pMDPs.

Our second contribution is the introduction of {\em parametric supermartingale certificates}. We develop these certificates by instantiating existing non-parametric supermartingale certificates in the parameter flattened MDP. This yields a new toolkit for reasoning about pMDPs that fundamentally differs from the existing techniques for finite-state pMDP model checking, and provides a general framework for generalizing supermartingale certificates to pMDPs with general measurable state and action spaces.

Our third contribution consists of fully automated algorithms for verification
and approximate synthesis in \textit{polynomial} pMDPs, i.e.~pMDPs whose state-, action-
and parameter-spaces and the stochastic kernel can be expressed in polynomial
real arithmetic. Using template-based synthesis and SMT solving, we
automatically construct parametric supermartingale certificates for
verification. For synthesis, we employ an abstraction–refinement procedure that
partitions the parameter space into regions of satisfaction and violation.
Alongside, as a witness of the satisfaction of the specification, both our verification and synthesis algorithms produce a
\textit{parametric strategy} in a closed-form expression that explicitly captures the strategy as a symbolic polynomial function of a parameter valuation,
which can be instantiated to concrete strategies by substituting in the parameter valuation.
To the best of our knowledge, this is the first algorithm that produces such parametric strategies in the literature on pMDPs.

Finally, we implement a prototype and evaluate it on infinite-state pMDP benchmarks modeling continuous random walk scenarios.




\smallskip\noindent{\bf Related work.} 
The verification of stochastic systems has a long and well-established history, encompassing the analysis of Markov chains as well as the synthesis and verification of policies for Markov decision processes (MDPs) and partially observable MDPs (POMDPs). More recently, significant attention has been devoted to stochastic systems with uncertain transition probabilities in \textit{finite}-state domains. This line of work broadly falls into two categories: models with interval-valued probabilities~\cite{chatterjee2008model} and models with parametric probability distributions~\cite{JansenJK22}. Parametric models are strictly more expressive, as appropriate choices of parameters can also represent probability intervals.

The study of pMDPs over infinite or continuous state spaces remains limited. To the best of our knowledge, the only known approach~\cite{peruffo2021formal} is restricted to abstraction-based techniques that support bounded-horizon properties, thereby limiting their applicability to long-run or asymptotic behaviors. Independently, supermartingale-based certificates have played a central role in the verification and synthesis of stochastic systems. These techniques apply to general (including infinite and continuous) state spaces and support a broad class of specifications, ranging from safety~\cite{PrajnaJP07,SteinhardtT12,ChatterjeeNZ17,TakisakaOUH21,ZhiWLOZ24,WangS0CG21,AbateEGPR23,WangYFLO24} and reachability~\cite{ChakarovS13,ChatterjeeFNH18,ChatterjeeFG16,McIverMKK18,AbateGR20,ChatterjeeGMZ22,LechnerZCH22,ChatterjeeGNZZ23,MajumdarS25,KuraUT26} to more expressive $\omega$-regular properties~\cite{HenzingerMSZ25,AbateGR24,AbateGR25,KuraU26,AbateGIR26}. Our work bridges two lines of research by combining supermartingale certificates with parametric stochastic models over infinite state spaces, thereby enabling verification of rich behavioral properties in the presence of parametric uncertainty.
\section{Preliminaries}\label{sec:prelims}

We assume that the reader is familiar with basic notions of measure and probability theory such as measurable space, probability measure, or expected value, see~\cite{williams1991probability} for formal definitions of these notions. For a measurable space $(X, \mathcal{F}_X)$ consisting of a set $X$ and a $\sigma$-algebra $\mathcal{F}_X$ over $X$, we use $\Delta(X,\mathcal{F}_X)$ to denote the set of all probability distributions over $(X,\mathcal{F}_X)$. For a probability distribution $\delta \in \Delta(X,\mathcal{F}_X)$, we use $\support(\delta)$ to denote its support. 
For a set $X$, we use 
$X^*$ and $X^\omega$ to denote the sets of all finite and infinite words over $X$.


\smallskip\noindent{\bf MDPs.} We now formally define Markov decision processes (MDPs) over general measurable state and action spaces. A (measurable) MDP is a tuple $M = (\States, \StatesSA, \Actions, \ActionsSA, \kernel, \Init)$, where
\begin{compactitem}
	\item $(\States, \StatesSA)$ is the {\em state space}, with $\States$ being the set of {\em states} and $\StatesSA$ being a $\sigma$-algebra over the set $\States$, with $\Init \subseteq \States$ being the set of {\em initial states},
	\item $(\Actions, \ActionsSA)$ is the {\em action space}, with $\Actions$ being the set of {\em actions} and $\ActionsSA$ being a $\sigma$-algebra over the set $\Actions$, and
	\item $\kernel: \States \times \Actions \times \StatesSA \rightarrow [0,1]$ is the {\em stochastic kernel} (or {\em probabilistic transition function}), where we require that (1) $\kernel(s, a, \cdot): \StatesSA \rightarrow [0,1]$ is a probability measure for each state-action pair $(s,a) \in \States \times \Actions$, and (2) $\kernel(\cdot, \cdot, B): \States \times \Actions \rightarrow [0,1]$ is a measurable function for each event $B \in \StatesSA$.
\end{compactitem}

For each state-action pair $(s,a) \in \States \times \Actions$, we sometimes use the notation $\kernel(\cdot \mid s,a) = \kernel(s,a,\cdot): \StatesSA \rightarrow [0,1]$ to denote the probability distribution over the state space induced by the stochastic kernel at the state-action pair $(s,a)$. In the special case when the action set is a singleton, i.e.~$|\Actions| = 1$ and $\ActionsSA = \{\emptyset,\Actions\}$, we say that $\MDP$ is a (measurable) {\em Markov chain}.

A {\em path} is an infinite sequence of state-action pairs $s_0,a_0,s_1,a_1, \dots \in (\States \times \Actions)^\omega$ such that $s_0 \in \Init$ and $s_{i+1} \in \support(P(\cdot \mid s_i,a_i))$ for each $i \in \mathbb{N}_0$. A {\em finite path} $s_0,a_0,s_1,a_1,\dots,s_i$ is a finite prefix of a path that ends in a state. We use $\Path^\MDP$ and $\FPath^\MDP$ to denote the sets of all paths and finite paths in the MDP $\MDP$.

\smallskip\noindent{\bf Strategies and semantics of MDPs.} Similarly to finite-state MDPs, the semantics of general measurable MDPs are formalized via the notion of strategies~\cite{Puterman94}.  A {\em strategy} (also known as a {\em policy} or a {\em controller}) in an MDP is a measurable map $\sigma: \States \rightarrow \Actions$ which to each state in the MDP assigns an action. We use $\StrategySpace^\MDP$ to denote the set of all strategies in the MDP $\MDP$. Note that in this work we restrict our attention to deterministic memoryless strategies, i.e.~strategies whose action assignment only depends on the current state and does not use information about the past, and which do not randomize between actions. We discuss the implications of this restriction in Remark~\ref{rmk:memoryless}.

The semantics of an MDP are defined as follows. For each initial state $
s \in \Init$, an MDP $\MDP$ and a strategy $\sigma$ together define a probability space over the set of all MDP paths. A random path in this probability space is generated as follows. The initial state of the path is $s_0 = s$. For each $i \in \mathbb{N}_0$, let the $i$-th state of the path be $s_i$. The $i$-th action $a_i$ of the path is then specified by the strategy via $a_i = \sigma(s_i)$, and the $(i+1)$-st state $s_{i+1}$ is sampled from the probability distribution $P(\cdot \mid s_i,a_i)$ specified by the stochastic kernel. The process is executed ad infinitum and gives rise to a random infinite path $s_0,a_0,s_1,a_1,\dots \in \Path^\MDP$ in the MDP.

This process is formalized via the cylinder construction~\cite[Theorem~2.7.2]{ash2000probability} and it gives rise to a probability space $(\Path^\MDP,\mathcal{F}_{\Path^\MDP},\mathbb{P}_s^{\MDP,\sigma})$ over the set of all MDP paths, where $\mathcal{F}_{\Path^\MDP}$ is a $\sigma$-algebra over the paths and $\mathbb{P}_s^{\MDP,\sigma}$ is a probability measure. We use $\mathbb{E}_s^{\MDP,\sigma}$ to denote expectation operator in this probability space.

\smallskip\noindent{\bf Parametric MDPs (pMDP).} We now define pMDPs,
which are the central object of study in this work. Intuitively, a pMDP is an MDP whose stochastic kernel is parametrized by finitely many
real-valued parameters, and each parameter valuation gives rise to one concrete
MDP with the stochastic kernel defined by the parameter valuation. Formally, a
{\em parametric MDP (pMDP)} is a tuple $M = (\States, \StatesSA, \Actions, \ActionsSA,
\kernel, \Init, \ParamSet)$, where each component is defined analogously as
above in the case of MDPs, with two exceptions, namely
\begin{compactitem}
	\item $U \subseteq \mathbb{R}^n$ is a Borel-measurable set of {\em parameter valuations}, where $n \in \mathbb{N}_0$ is the number of real-valued {\em parameters}, and
	\item $\kernel: \ParamSet \times \States \times \Actions \times \StatesSA \rightarrow [0,1]$ is the {\em parametric stochastic kernel} (or {\em parametric probabilistic transition function}), where we require that (1) for each parameter valuation $u \in \ParamSet$, $\kernel(u, \cdot, \cdot, \cdot): \States \times \Actions \times \StatesSA \rightarrow [0,1]$ is a stochastic kernel as defined above, and (2) $\kernel(\cdot, s, a, B): U \rightarrow [0,1]$ is a measurable function, where $\ParamSet$ is equipped with the Borel $\sigma$-algebra.
\end{compactitem}
Intuitively, each parameter valuation $u \in \ParamSet$ produces a stochastic kernel $\kernel[u] = \kernel(u,\cdot,\cdot,\cdot): \States \times \Actions \times \StatesSA \rightarrow [0,1]$, which in turn gives rise to a (non-parametric) MDP $\MDP[u] = (\States, \StatesSA, \Actions, \ActionsSA, \kernel[u], \Init)$.
Moreover, (non-parametric) MDPs are a special case of pMDPs, where the stochastic kernel $P=P[u]$ for all $u\in U$.

\begin{example}[Running example]\label{ex:running}
	We define a pMDP $M$ which will serve as our running example throughout this work. The state space $(\States,\StatesSA)$ of this pMDP is the interval of real values $\States = [-2,150]$ equipped with the Borel $\sigma$-algebra $\StatesSA = \mathcal{B}([-2,150])$, and the action space $(\Actions,\ActionsSA)$ is the interval of real values $\Actions = [0,0.6]$ equipped with the Borel $\sigma$-algebra $\ActionsSA = \mathcal{B}([0,0.6])$. The pMDP contains $2$ real-valued parameters and the set of parameter valuations is $\ParamSet = [-1,1]^2 = \{(u_1,u_2) \mid -1 \leq u_1, u_2 \leq 1\}$. Finally, the parametric stochastic kernel is defined by the equation
	\begin{equation}\label{eq:runningexample}
		s' \coloneqq 
		\begin{cases}
			s + a + u_1 + u_2 \cdot \Uniform[-0.75,0.25],	&	\text{if }0 \leq s \leq 100,\\
			s,							&	\text{otherwise},
		\end{cases}
	\end{equation}
\end{example}
where $\Uniform[-0.75,0.25]$ denotes the uniform distribution over the real interval $[-0.75,0.25]$. In other words, for each parameter valuation $(u_1,u_2) \in \ParamSet$, state $s \in \States$ and action $a \in \Actions$, the probability distribution $\kernel(\cdot \mid u, s, a)$ induced by the parametric stochastic kernel is defined as in eq.~\eqref{eq:runningexample}. Finally, the set of initial states is the interval of real values $\Init = [2,3]$.

\smallskip\noindent{\bf Parametric strategies and semantics of pMDPs.} The semantics of pMDPs are formalized via the notion of parametric strategies. Intuitively, a parametric strategy in a pMDP is an object which, for each parameter valuation, gives rise to one concrete map that to each state assigns an action. Formally, a {\em parametric strategy} in a pMDP $\MDP$ is a measurable map $\sigma: \ParamSet \times \States \rightarrow \Actions$  such that, for each parameter valuation $u \in \ParamSet$, the map $\sigma[u] = \sigma(u,\cdot): \States \rightarrow \Actions$ is a strategy in the MDP $\MDP[u]$. We use $\StrategySpace^\MDP$ to denote the set of all parametric strategies in $\MDP$. 

For each parameter valuation $u \in \ParamSet$, the MDP $\MDP[u]$, initial state $s \in \Init$ and parametric strategy $\sigma[u]$ together gives rise to a probability space $(\Path^{\MDP[u]},\mathcal{F}_{\Path^{\MDP[u]}},\mathbb{P}_s^{\MDP[u],\sigma[u]})$ defined analogously as above. We use $\mathbb{E}_s^{\MDP[u],\sigma[u]}$ to denote the expectation operator in this probability space.

	
\smallskip\noindent{\bf Specifications.} A {\em specification} $\phi$ in a pMDP $\MDP$ is a set of paths $\phi$ in $\Path^\MDP$. For the semantics of our problem to be well-defined, we require that this set of paths belongs to the $\sigma$-algebra induced by the pMDP semantics, i.e.~that $\phi \in \mathcal{F}_{\Path^{\MDP[u]}}$ for every $u \in U$. We say that a path $\rho \in \Path^\MDP$ {\em satisfies} the specification $\phi$, if $\rho \in \phi$. For each parameter valuation $u \in U$, initial state $s \in \Init$, and parametric strategy $\sigma$, we write $\mathbb{P}_s^{\MDP[u],\sigma[u]}[\phi]$ to denote the probability that a path in $\MDP$ sampled from the underlying probability space satisfies $\phi$. We use $\neg \phi = \Path^\MDP \backslash \phi$ to denote the specification which is satisfied by all paths that {\em do not} satisfy $\phi$ and vice versa.

Our examples will mostly focus on reachability and safety specifications. Let $X \in \StatesSA$ be a measurable set of MDP states. We define the {\em reachability specification} with respect to the set $X$ via $\Reach(X) = \{s_0,a_0,s_1,a_1,\dots \in \Path^\MDP \mid \exists i \in \mathbb{N}_0.\, s_i \in X\}$,
and the {\em safety specification} with respect to the set $X$ via $\Safe(X) = \{s_0,a_0,s_1,a_1,\dots \in \Path^\MDP \mid \forall i \in \mathbb{N}_0.\, s_i \in X\}$. 

\section{Problem Statement}\label{sec:problem}

We now define the parameter verification and synthesis problems that we consider in this work. In what follows, suppose that $\MDP = (\States, \StatesSA, \Actions, \ActionsSA, \kernel, \Init, \ParamSet)$ is a pMDP with $\ParamSet \subseteq \mathbb{R}^n$ being the set of parameter valuations, and let $\phi$ be a specification in $\MDP$. Following the terminology in~\cite{JungesAHJKQV24}, we say that a {\em parameter region} is a Borel-measurable subset $R \subseteq U$ of parameter valuations. We use $\MDP^R = (\States, \StatesSA, \Actions, \ActionsSA, \kernel, \Init, R)$ to denote the pMDP identical to $\MDP$ but whose set of parameter valuations is restricted to $R$.

\smallskip\noindent{\bf Specifications in pMDPs.} Section~\ref{sec:prelims}
formalizes specifications and the notion of paths satisfying specifications.
However, a pMDP gives rise to a probability space over the set of all paths for
each parameter valuation. Hence, when reasoning about specification satisfaction
in pMDPs, we are interested in deriving lower (and, dually, upper) bounds on the
probability of a random path in this probability space satisfying the
specification. Formally, we distinguish between {\em angelic} and {\em demonic}
satisfaction of the specification, depending on whether the specification needs
to be satisfied {\em for some} or {\em for all} parametric strategies in the
pMDP. Our definitions generalize the definitions of~\cite{JungesAHJKQV24} from
the setting of finite-state pMDPs to the setting of general measurable pMDPs:
\begin{compactenum}
	\item {\em Angelic satisfaction.} We say that the pMDP $\MDP$ {\em angelically satisfies} specification $\phi$ with probability at least $p \in [0,1]$, denoted $\MDP \models_a \phi, p$, if there exists a parametric strategy $\sigma: \ParamSet \times \States \rightarrow \Actions$ that leads to specification satisfaction for all parameter valuations, i.e.
	\begin{equation}\label{eq:angelic}
		\exists \sigma \in \StrategySpace^\MDP.\,\,\, \forall u \in U.\,\,\, \forall s \in \Init.\,\,\, \mathbb{P}_s^{\MDP[u],\sigma[u]}[\phi] \geq p.
	\end{equation}
	\item {\em Demonic satisfaction.} We say that the pMDP $\MDP$ {\em demonically satisfies} specification $\phi$ with probability at least $p \in [0,1]$, denoted $\MDP \models_d \phi, p$, if all parametric strategies $\sigma: \ParamSet \times \States \rightarrow \Actions$ lead to specification satisfaction for all parameter valuations, i.e.
	\begin{equation}\label{eq:demonic}
		\forall \sigma \in \StrategySpace^\MDP.\,\,\, \forall u \in U.\,\,\, \forall s \in \Init.\,\,\, \mathbb{P}_s^{\MDP[u],\sigma[u]}[\phi] \geq p.
	\end{equation}
\end{compactenum}
In both cases, quantification is over {\em parametric strategies} of type $\sigma: U \times S \rightarrow A$, which for each parameter valuation induce one regular, non-parametric strategy.

\smallskip\noindent{\bf Problems.} We consider the following problems, as discussed in Section~\ref{sec:intro}:
\begin{compactenum}
	\item {\bf Verification problem.} Given a pMDP $\MDP$ with a set of parameter valuations $U$, a specifciation $\phi$, a probability threshold $p \in [0,1]$, and a type of specification satisfaction $\diamondsuit \in \{a,d\}$, prove that $\MDP \models_\diamondsuit \phi, p$.
	\item {\bf Synthesis problem.} Given a pMDP $\MDP$ with a set of parameter valuations $U$, a specifciation $\phi$, a probability threshold $p \in [0,1]$, and a type of specification satisfaction $\diamondsuit \in \{a,d\}$, partition the parameter set $U$ into two regions $R$ and $U \backslash R$ such that $\MDP^R \models_\diamondsuit \phi, p$ and $\MDP^{U \backslash R} \models_{\clubsuit} \neg\phi, 1-p$, where $\clubsuit$ is the opposite symbol of $\diamondsuit$ in $\{a,d\}$. 
	\item {\bf Approximate synthesis problem.} Given a pMDP $\MDP$ with a set of parameter valuations $U$, a specifciation $\phi$, a probability threshold $p \in [0,1]$, a type of specification satisfaction $\diamondsuit \in \{a,d\}$, and an approximation parameter $c \in [0,1]$, partition the parameter set $U$ into two disjoint regions $R$ and $R'$ such that $\MDP^R \models_\diamondsuit \phi, p$ and $\MDP^{R'} \models_{\clubsuit} \neg\phi, 1-p$, where $\clubsuit$ is the opposite symbol of $\diamondsuit$ in $\{a,d\}$, and such that $(\mu(R) + \mu(R')) / \mu(U) \geq 1 - c$ where $\mu$ is the Lebesgue measure over the parameter space.
\end{compactenum}

\begin{example}\label{ex:problem}
	Consider the pMDP $\MDP$ with parameter valuations $U = [-1,1]^2$ as in Example~\ref{ex:running}. Let $\phi = \Reach([90,150])$, i.e.~a reachability specification for the target set $T = [90,150]$, and let $p = 0.9$. For the sake of the example, suppose that we are interested in angelic satisfaction, i.e.~$\diamondsuit = a$.
	
	Given a parameter region $R \subseteq U$, the verification problem is concerned with proving that there exists a parametric strategy s.t., for each $(u_1,u_2) \in R$ and initial state $s \in [2,3]$, $\Reach([90,150])$ is satisfied with probability at least $0.9$.
	
	On the other hand, the (approximate) synthesis problem is concerned with partitioning $U$ into two parameter regions $R$ and $R'$ such that (1)~there exists a parametric strategy s.t., for each $(u_1,u_2) \in R$ and initial state $s \in [2,3]$, $\Reach([90,150])$ is satisfied with probability at least $0.9$, while (2)~for all parametric strategies, for each $(u_1,u_2) \in R$ and initial state $s \in [2,3]$, $\neg\Reach([90,150]) = \Safe([-2,90))$ is satisfied with probability at least $1 - 0.9 = 0.1$. Thus, for parameter valuations in $R$ we require satisfaction of an angelic reachability specification with probability at least $0.9$, while for parameter valuations in $R'$ we require satisfaction of a demonic safety specification with probability at least $0.1$.
\end{example}

\begin{remark}[Consistency with finite-state pMDP definitions]
	It may seem that our definitions of angelic and demonic satisfaction in
	eq.~\eqref{eq:angelic} and~\eqref{eq:demonic} syntactically deviate from
	those presented in~\cite{JungesAHJKQV24} for finite-state pMDPs, which
	require: 
	\begin{alignat}{3}
		&\textbf{[angelic:]}\quad &&\forall u \in U.\,\,\, \exists
		\strategy[u] \in \StrategySpace^{\MDP[u]}.\,\,\, \forall s \in
		\Init.\,\,\, \mathbb{P}_s^{\MDP[u],\sigma[u]}[\phi] \geq p,\label{eq:aux1}\\
		&\textbf{[demonic:]}\quad &&\forall u \in U.\,\,\, \forall \strategy[u] \in \StrategySpace^{\MDP[u]}.\,\,\, \forall s \in \Init.\,\,\, \mathbb{P}_s^{\MDP[u],\sigma[u]}[\phi] \geq p.\label{eq:aux2}
	\end{alignat}
	%
	In reality, our definitions in eq.~\eqref{eq:angelic} and~\eqref{eq:demonic}
	{\em generalize} these definitions to the setting of general measurable
	pMDPs, such that they are mathematically well-defined and satisfy all
	measurability requirements, while remaining {\em equivalent} to the
	definitions of~\cite{JungesAHJKQV24} for \textit{finite-state} pMDPs. This
	is because our definitions in eq.~\eqref{eq:angelic} and~\eqref{eq:demonic}
	consider existential and universal quantifications over {\em parametric
	strategies}, but we still require that for each $u \in U$ there exists
	$\sigma[u]$ for angelic satisfaction, and that for each $u \in U$ demonic
	satisfaction holds for all $\sigma[u] \in U$. On the other hand, the
	existential and universal quantification over parametric strategies is
	necessary to ensure that the resulting strategies are {\em measurable} (note
	that this is a condition in our definition of parametric strategies), while
	the conditions in eq.~\eqref{eq:aux1} and~\eqref{eq:aux2} do not guarantee
	that the induced parametric strategies are measurable. 
	Eq.~\eqref{eq:aux1} guarantees the existence of a strategy $\sigma[u]$ in
	$\MDP[u]$ for each parameter valuation $u \in \ParamSet$, the parametric
	strategy $\sigma$ in the pMDP defined via $\sigma(u,s) := \sigma[u](s)$ for
	every $u \in \ParamSet$ and $s \in \States$ may not give rise to a
	measurable function.
	However, in the setting of finite-state pMDPs, the two definitions coincide since we consider the discrete topology over the finite state and action spaces, so each parametric strategy $\sigma$ in $\MDP$ gives rise to one strategy $\sigma[u]$ in $\MDP[u]$ for every parameter valuation $u \in \ParamSet$ and vice versa.
\end{remark}

\begin{remark}[Restriction to deterministic memoryless strategies]\label{rmk:memoryless}
	As stated in Section~\ref{sec:prelims}, we restrict our attention to deterministic memoryless strategies. The question of whether deterministic memoryless strategies are sufficient to satisfy a specification, both in MDPs and pMDPs, depends on the specification of interest. For instance, it is known that whenever there exists a strategy in an MDP satisfying a reachability specification, then there exists a memoryless deterministic strategy satisfying the specification. However, this statement does not hold for general omega-regular specifications in MDPs. In Proposition~\ref{prop:eq}, we show that our parameter flattening transformation is semantics preserving, which implies that deterministic memoryless parametric strategies are sufficient in pMDPs whenever deterministic memoryless strategies are sufficient in non-parametric MDPs.
\end{remark}
\section{Parameter Flattening}\label{sec:transformation}

The heart of our approach is parameter flattening, which is a novel
transformation of pMDPs into their semantically equivalent non-parametric counterparts.
This transformation allows us to take \textit{any} existing verification or
synthesis tool for non-parametric MDPs, and use it for solving the analogous problem for pMDPs.
In our work, we will use supermartingale certificates built for non-parametric
MDPs to design parametric supermartingale certificates for pMDPs; this will be
shown in Section~\ref{sec:theory} and \ref{sec:algo}.

Given a pMDP $\MDP$, the \textit{parameter flattening} transformation gives rise to a non-parametric MDP $\tilde{\MDP}$, which we call the \textit{parameter flattened MDP} of $\MDP$. Intuitively, the parameter flattened MDP $\tilde{\MDP}$ is obtained by incorporating parameter valuations into the state space of the original pMDP $\MDP$. Hence, each state in the parameter flattened MDP is a pair of a state and a parameter valuation in the original pMDP. On the other hand, the stochastic kernel of the parameter flattened MDP for each state-parameter valuation pair and for each action samples the successor state according to the stochastic kernel of the original pMDP $\MDP$, {\em while keeping the parameter valuation unchanged}. The following definition formalizes this informal construction.

\begin{definition}[Parameter flattening]\label{def:parameterflattening}
	Let $\MDP = (\States, \StatesSA, \Actions, \ActionsSA, \kernel, \Init, \ParamSet)$ be a pMDP. A {\em parameter flattened MDP} of $\MDP$ is the non-parametric MDP $\tilde{\MDP} = (\tilde{\States}, \tilde{\StatesSA}, \tilde{\Actions}, \tilde{\ActionsSA}, \tilde{\kernel}, \tilde{\Init})$ whose each component is defined as follows:
	\begin{compactitem}
		\item $\tilde{\States} = \States \times \ParamSet$ and $\tilde{\StatesSA} = \StatesSA \times \mathcal{B}(\ParamSet)$ where $\mathcal{B}(\ParamSet)$ is the Borel $\sigma$-algebra over $\ParamSet$, i.e.~the state space of the parameter flattened MDP is the product of the state space and the parameter space of the original pMDP.
		\item $\tilde{\Actions} = \Actions$ and $\tilde{\ActionsSA} = \ActionsSA$, i.e.~the action space of the parameter flattened MDP coincides with the action space of the original pMDP.
		\item $\tilde{\kernel}: \tilde{\States} \times \tilde{\Actions} \times \tilde{\StatesSA} \rightarrow [0,1]$ is defined as follows: For all $(s,u) \in \tilde{\States} = \States \times \ParamSet$, $a \in \tilde{\Actions} = \Actions$ and $(B,X) \in \tilde{\StatesSA} = \StatesSA \times \mathcal{B}(\ParamSet)$, we set $\tilde{\kernel}((s,u), a, (B,X)) = \kernel(u,s,a,B)$ if $u \in X$ and $\tilde{\kernel}((s,u), a, (B,X)) = 0$ if $u \not\in X$. That is, the stochastic kernel of the parameter flattened MDP induces the same probability distribution over the successor state as the original stochastic kernel {\em while keeping the parameter value $u$ unchanged} (so $u'=u$).
		\item $\tilde{\Init} = \Init \times \ParamSet$, i.e.~the set of the initial states in the parameter flattened MDP is the product of the set of initial states and the set of parameter valuations of the original pMDP. 
	\end{compactitem}
\end{definition}

\begin{example}
	Consider the pMDP $\MDP$ in Example~\ref{ex:running}. The parameter flattened MDP $\tilde{\MDP}$ of $\MDP$ is defined as follows. The state space is $\tilde{\States} = [-2,150] \times [-1,1]^2$, i.e.~the product of the state space and parameter valuations in $\MDP$, with the Borel $\sigma$-algebra $\tilde{\StatesSA} = \mathcal{B}([-2,150]) \times \mathcal{B}([-1,1]^2) = \mathcal{B}([-2,150] \times [-1,1]^2)$. The action space is the same as in $\MDP$, i.e. $\tilde{\Actions} = [0,0.6]$ and $\tilde{\ActionsSA} = \mathcal{B}([0,0.6])$. For each state $(s,u_1,u_2) \in \tilde{\States} = [-2,150] \times [-1,1]^2$ and for each action $a \in \tilde{\Actions} = [0,0.6]$, the probability distribution $\tilde{\kernel}(\cdot \mid (s, u_1,u_2), a)$ induced by the stochastic kernel is again defined by eq.~\eqref{eq:runningexample}. Finally, the initial set is $\tilde{\Init} = [2,3] \times [-1,1]^2$.
\end{example}

Every path $s_0,a_0,s_1,a_1, \dots$ and parameter valuation $u \in \ParamSet$ in~$\MDP$ together induce a path $(s_0,u),a_0,(s_1,u),a_1, \dots$ in the parameter flattened MDP $\tilde{\MDP}$. Hence, a specification $\phi$ in $\MDP$ induces a {\em parameter flattened specification} $\tilde{\phi}$ in the parameter flattened MDP $\tilde{\MDP}$ defined via $\tilde{\phi} = \{(s_0,u),a_0,(s_1,u),a_1, \dots \in \Path^{\tilde{\MDP}} \mid s_0,a_0,s_1,a_1, \dots \in \phi \land u \in \ParamSet\}$.
On the other hand, every parametric strategy $\sigma: U \times \States \rightarrow \Actions$ in the pMDP $\MDP$ induces a {\em parameter flattened strategy} $\tilde{\sigma}: \tilde{\States} \rightarrow \tilde{\Actions}$ via $\tilde{\sigma}(s,u) = \sigma[u](s)$ for all $(s,u) \in \tilde{\States} = \States \times \ParamSet$.

The following proposition establishes the semantic equivalence between pMDPs and their parameter flattened MDPs. The proof is in Appendix~\ref{app:propositionproof}.

\begin{proposition}[Parameter flattening is semantic preserving]\label{prop:eq}
	Let $\MDP = (\States, \StatesSA, \Actions, \ActionsSA, \kernel, \Init, \ParamSet)$ be a pMDP, $\sigma: \ParamSet \times \States \rightarrow \Actions$ be a parametric strategy in $\MDP$, and $\phi$ be a specification in $\MDP$. Let $\tilde{\MDP} = (\tilde{\States}, \tilde{\StatesSA}, \tilde{\Actions}, \tilde{\ActionsSA}, \tilde{\kernel}, \tilde{\Init})$, $\tilde{\sigma}: \tilde{\States} \rightarrow \tilde{\Actions}$ and $\tilde{\phi}$ be the parameter flattened MDP of $\MDP$, the parameter flattened strategy of $\sigma$ and the parameter flattened specification of $\phi$, respectively. Then, for each initial state $s \in \Init$ and each parameter valuation $u \in U$ in the pMDP $\MDP$, 
	\[ \mathbb{P}_s^{\MDP[u],\sigma[u]}[\phi] = \mathbb{P}_{(s,u)}^{\tilde{\MDP},\tilde{\strategy}}[\tilde{\phi}]. \]
\end{proposition}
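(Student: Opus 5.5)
The plan is to build an explicit measurable embedding of paths and show that it carries $\mathbb{P}_s^{\MDP[u],\sigma[u]}$ onto $\mathbb{P}_{(s,u)}^{\tilde{\MDP},\tilde{\strategy}}$. Fix $u \in \ParamSet$ and $s \in \Init$. Define $\iota_u : \Path^{\MDP[u]} \to \Path^{\tilde{\MDP}}$ by
\[ \iota_u(s_0,a_0,s_1,a_1,\dots) = (s_0,u),a_0,(s_1,u),a_1,\dots . \]
By definition of $\tilde{\phi}$, a path of the form $\iota_u(\rho)$ lies in $\tilde{\phi}$ if and only if $\rho \in \phi$. It therefore suffices to prove two things. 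First, the pushforward measure $\mathbb{P}_s^{\MDP[u],\sigma[u]} \circ \iota_u^{-1}$ equals $\mathbb{P}_{(s,u)}^{\tilde{\MDP},\tilde{\strategy}}$. Second, the latter measure is concentrated on paths whose parameter component is constantly $u$. Together these give
\[ \mathbb{P}_{(s,u)}^{\tilde{\MDP},\tilde{\strategy}}[\tilde{\phi}] = \mathbb{P}_{(s,u)}^{\tilde{\MDP},\tilde{\strategy}}[\tilde{\phi} \cap \iota_u(\Path^{\MDP[u]})] = \mathbb{P}_s^{\MDP[u],\sigma[u]}[\phi]. \]

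\textbf{Concentration.} Let $C_k$ be the event that the parameter components of the first $k+1$ states all equal $u$. By the definition of $\tilde{\kernel}$, $\tilde{\kernel}((s',u),a,(\States,\{u\})) = \kernel(u,s',a,\States) = 1$, and $\{u\}$ is Borel. An induction on $k$ using the cylinder construction then gives $\mathbb{P}(C_k)=1$ for every $k$. Continuity from above yields probability one for $\bigcap_k C_k$, the set of paths with constant parameter component $u$, which is exactly $\iota_u(\Path^{\MDP[u]})$ intersected with $\Path^{\tilde{\MDP}}$.

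\textbf{Agreement on cylinders.} Both path measures are obtained from the Ionescu-Tulcea/cylinder construction, so they are determined by their values on measurable rectangles (cylinder sets) $(B_0\times X_0)\times A_0\times\dots\times(B_k\times X_k)$, which form a $\pi$-system generating the path $\sigma$-algebra. On such a set, the flattened measure expands into an iterated integral whose integrands are $\tilde{\kernel}((s_i,u),\tilde{\strategy}(s_i,u),\cdot)$ and whose action terms are the indicators $\mathbf{1}[\tilde{\strategy}(s_i,u)\in A_i]$. Since $\tilde{\strategy}(s_i,u)=\sigma[u](s_i)$ and $\tilde{\kernel}((s_i,u),a,B\times X)=\kernel(u,s_i,a,B)\mathbf{1}[u\in X]$, this iterated integral equals $\prod_i \mathbf{1}[u\in X_i]$ times the corresponding iterated integral for $\mathbb{P}_s^{\MDP[u],\sigma[u]}$ on $B_0\times A_0\times\dots\times B_k$. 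That is precisely the pushforward measure of the rectangle. By the $\pi$-$\lambda$ theorem the two measures coincide.

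\textbf{Main obstacle.} The delicate part is measure-theoretic bookkeeping rather than any real difficulty.
\begin{itemize}
\item Definition~\ref{def:parameterflattening} only specifies $\tilde{\kernel}$ on rectangles $B\times X$. It must be checked that this determines a kernel on the full product $\sigma$-algebra. It does, as the product measure $\kernel(u,s,a,\cdot)\otimes\delta_u$, and measurability in $(s,u,a)$ follows from the joint measurability assumptions on $\kernel$ via a monotone class argument.
\item $\iota_u$ must be measurable. This holds because the preimages of rectangles are rectangles.
\item $\tilde{\phi}$ must be measurable. This follows since, on the full-measure set $\bigcap_k C_k$, it coincides with $\iota_u(\phi)$, whose measurability is inherited via $\iota_u$ being a measurable bijection onto a measurable set with measurable inverse given by coordinate projection.
\end{itemize}
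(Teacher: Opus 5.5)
Your proposal is correct and follows the paper's route. Both identify paths via the constant-parameter map $\rho \mapsto \iota_u(\rho)$, show that the two path measures agree on a generating $\pi$-system of cylinders, and transfer $\phi$ to $\tilde{\phi}$ along this correspondence.

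Your version is more careful than the paper's in two places. First, the paper checks agreement on point-prefix cylinders $Cyl(s,a_0,\dots,s_k)$, which do not generate the path $\sigma$-algebra when the state space is continuous; you use measurable rectangles instead. Second, the paper reads surjectivity off the support-based definition of paths, whereas you prove concentration on constant-parameter paths directly. One correction: the paper only assumes $\kernel$ is measurable in $u$ and in $(s,a)$ separately, not jointly. So the fact that $\tilde{\kernel}$ is a kernel is an implicit standing assumption of the statement, not a consequence of ``joint measurability assumptions on $\kernel$''.
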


The following theorem is then a corollary of the above proposition and of our definitions of angelic and demonic satisfaction of specifications in parametric and non-parametric MDPs. We defer the proof to Appendix~\ref{app:transformation}.

\begin{theorem}[Soundness and completeness]\label{thm:transformation}
	Let $\MDP$ be a pMDP, $\tilde{\MDP}$ be the parameter flattened
	MDP of $\MDP$, $\phi$ be a specification in $\phi$, $p \in [0,1]$ be a
	probability threshold, and $\diamondsuit \in \{a,d\}$ denote either angelic
	or demonic satisfaction. Then, we have $\MDP \models_\diamondsuit \phi,p$ if
	and only if $\tilde{\MDP} \models_\diamondsuit \tilde{\phi},p$.
\end{theorem}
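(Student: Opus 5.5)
The plan is to reduce the statement to Proposition~\ref{prop:eq} by exhibiting a bijection between parametric strategies in $\MDP$ and strategies in $\tilde{\MDP}$. Angelic and demonic satisfaction for the non-parametric MDP $\tilde{\MDP}$ are read with the trivial parameter set: $\tilde{\MDP} \models_a \tilde{\phi},p$ means there is a strategy $\tilde{\sigma}\in\StrategySpace^{\tilde{\MDP}}$ with $\mathbb{P}^{\tilde{\MDP},\tilde{\sigma}}_{\tilde{s}}[\tilde{\phi}]\ge p$ for all $\tilde{s}\in\tilde{\Init}$. Demonic satisfaction is the same with ``for all $\tilde{\sigma}$''. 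Since $\tilde{\Init}=\Init\times\ParamSet$, the quantifier ``$\forall \tilde{s}\in\tilde{\Init}$'' is exactly ``$\forall u\in\ParamSet.\ \forall s\in\Init$'' applied to the pair $(s,u)$.

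The first step is to show that the map $\sigma\mapsto\tilde{\sigma}$, with $\tilde{\sigma}(s,u)=\sigma(u,s)$, is a bijection from $\StrategySpace^\MDP$ onto $\StrategySpace^{\tilde{\MDP}}$. Both classes consist of measurable maps, differing only by the coordinate swap $\States\times\ParamSet\to\ParamSet\times\States$. This swap is a measurable isomorphism of the product $\sigma$-algebras $\StatesSA\otimes\mathcal{B}(\ParamSet)$ and $\mathcal{B}(\ParamSet)\otimes\StatesSA$, so measurability transfers in both directions. Conversely, a measurable $\tilde{\sigma}$ yields $\sigma(u,s):=\tilde{\sigma}(s,u)$. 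Each section $\sigma[u]=\tilde{\sigma}(\cdot,u)$ is measurable, because sections of jointly measurable maps are measurable, so $\sigma[u]$ is a strategy in $\MDP[u]$ and $\sigma$ is a parametric strategy.

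With this bijection in hand, both equivalences follow by rewriting. For $\diamondsuit=a$: $\MDP\models_a\phi,p$ iff there is $\sigma$ with $\mathbb{P}_s^{\MDP[u],\sigma[u]}[\phi]\ge p$ for all $u,s$. By Proposition~\ref{prop:eq} this holds iff $\mathbb{P}_{(s,u)}^{\tilde{\MDP},\tilde{\sigma}}[\tilde{\phi}]\ge p$ for all $(s,u)\in\tilde{\Init}$, and by surjectivity of $\sigma\mapsto\tilde\sigma$ this holds iff $\tilde{\MDP}\models_a\tilde{\phi},p$. The demonic case is identical, with the existential quantifier replaced by a universal one. Here the bijection is needed in both directions: every $\tilde{\sigma}$ must arise from some $\sigma$, and conversely.

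The only step needing care is the measurability in the bijection, specifically the converse direction, where a strategy of $\tilde{\MDP}$ must be shown to be a parametric strategy. This is resolved by the standard fact that sections of product-measurable functions are measurable. I also note that the statement's ``specification in $\phi$'' should read ``specification in $\MDP$''. Finally, $\tilde{\phi}$ must be measurable in $\tilde{\MDP}$ for $\tilde{\MDP}\models_\diamondsuit\tilde{\phi},p$ to be meaningful. I take this as implicit in Proposition~\ref{prop:eq}, which already uses the probability of $\tilde{\phi}$.
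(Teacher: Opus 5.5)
Your proposal is correct and follows essentially the same route as the paper's proof. Both reduce the claim to Proposition~\ref{prop:eq} via the correspondence $\sigma \leftrightarrow \tilde{\sigma}$ between parametric strategies in $\MDP$ and strategies in $\tilde{\MDP}$, handling the angelic and demonic quantifiers separately. You additionally justify that this correspondence is a bijection that preserves measurability in both directions, via the coordinate swap of product $\sigma$-algebras and the measurability of sections; the paper simply asserts this with ``and vice versa''.
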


Hence, in order to prove specification satisfaction in a pMDP, it suffices to prove satisfaction in the parameter flattened MDP. This motivates our construction of parametric supermartingale certificates as instantiations of supermartingale certificates in the state space of parameter flattened MDPs.


\section{Parametric Supermartingale Certificates}\label{sec:theory}

We now present our novel parametric supermartingale certificates for proving
satisfaction of specifications in pMDPs. Our parametric
supermartingale certificates generalize the existing supermartingale
certificates for non-parametric MDPs by incorporating the parameters within the
definitions of the certificates.
This is done in such a way that the new parametric certificates can be computed
by {\em instantiating} non-parametric supermartingale certificates in the state
space of the parameter flattened MDP. While this leads to an elegant framework
for generalizing most existing supermartingale certificates to the setting of
pMDPs, it is technically non-trivial and requires care in
appropriately handling the quantification of parameter valuations and
distinguishing between angelic and demonic satisfaction of specifications in
pMDPs.
To the best of our knowledge, this is the first work that uses supermartingale
certificates to reason about pMDPs.

This section is organized as follows. We illustrate the definition
of parametric supermartingale certificates via instantiation in the state space
of the parameter flattened MDP on two classes of commonly used supermartingale
certificates which will also be used in our experimental evaluation.
Specifically, we present pMDP generalizations of stochastic barrier
functions for probability $p \in [0,1)$ safety~\cite{PrajnaJP07}
(Section~\ref{sec:parametricsbf}) and of reach-avoid supermartingales for
probability $p \in [0,1)$ reachability~\cite{ZikelicLHC23}
(Section~\ref{sec:parametricrasm}). In Appendix~\ref{app:parametricrsm}, we show
a pMDP generalization of another established supermartingale
certificate, namely ranking supermartingales (RSMs) for probability~1 reachability~\cite{ChakarovS13}. 

While we use these established supermartingale certificates to illustrate our
construction of parametric supermartingale certificates, we stress that our
construction applies to other supermartingale certificates as well, allowing us
to generalize a broad class of supermartingale certificates to the setting of
pMDPs. These include the recent certificates for general $\omega$-regular
specifications in non-parametric MDPs~\cite{AbateGR24,HenzingerMSZ25,AbateGR25}. 
Appendix~\ref{app:otherspecifications} formalizes the class of supermartingale
certificates that can be generalized for pMDPs in the same manner.

\smallskip\noindent{\bf Invariants.} Supermartingale certificates are functions that map MDP states to real values that are required to satisfy a set of conditions at every reachable MDP state. Since computing the set of reachable states in a general measurable MDP is computationally intractable, we define supermartingale certificates with respect to an over-approximation of the set of all reachable states which is called a (supporting) invariant. Formally, a state in an MDP is said to be {\em reachable} if it is contained in some MDP path. An {\em invariant} in an MDP is a measurable set $I$ of MDP states that contains all reachable states in the MDP. We note that, while in this section we assume the existence of a supporting invariant, our verification and synthesis algorithms in Section~\ref{sec:algo} will compute the invariant together with the parametric supermartingale certificate.

\subsection{Parametric Supermartingales for Safety
Specifications}\label{sec:parametricsbf}

For proving probability $p \in [0,1)$ (a.k.a.~quantitative) safety
specifications, {\em stochastic barrier functions (SBFs)} are an established
supermartingale certificate~\cite{PrajnaJP07}.
We present their generalization to the setting of pMDPs by instantiating them in the state space of parameter flattened MDPs.

We first recall (non-parametric) SBFs. Intuitively, given a Markov chain $\MDP$, a set of safe states $X$ that we want to prove the system always remains in with probability at least $p$, and an invariant $I$ that over-approximates the set of reachable states, an SBF is a measurable function $V: \States \rightarrow \mathbb{R}$ that to each state assigns a real value that is required to satisfy the following four conditions: {\bf (C1)} $V$ is nonnegative at all states in the invariant $I$, {\bf (C2)} $V(s) \leq 1$ at all initial states $s$ in the MDP, {\bf (C3)} $V(s) \geq \frac{1}{1-p}$ at all invariant states $s$ outside of the safe set~$X$, and {\bf (C4)} at all invariant states $s$ at which $V(s) \leq \frac{1}{1-p}$, the value of $V$ (non-strictly) decrease in expected value upon the one-step execution of the Markov chain. Thus, the value of the SBF is required to be nonnegative, initially be below $1$ and to decrease in expectation, while the safety specification can be violated only if the value of the SBF exceeds $\frac{1}{1-p}$ despite the expected decrease. It was shown in~\cite{PrajnaJP07} that these conditions together ensure that the safety specification is satisfied with probability at least $p$.


The following definition formalizes the above intuition and generalizes it to the setting of pMDPs by instantiating it in the state space of parameter flattened MDP. In doing so, however, we need to exercise additional care to properly quantify parameter valuations, and to achieve the extension from Markov chains to MDPs by differentiating between angelic and demonic satisfaction. Note that, since parametric SBFs are defined over the state space of parameter flattened MDP, they are defined as measurable functions of type $V: \States \times \ParamSet \rightarrow \mathbb{R}$ that map {\em states of the parameter flattened MDP} to real values.

\begin{definition}[Parametric SBFs]\label{def:psbf}
	Let $\MDP = (\States, \StatesSA, \Actions, \ActionsSA, \kernel, \Init, \ParamSet)$ be a pMDP, $I \in \StatesSA$ be an invariant, $X \in \StatesSA$ be a set of safe states, $p \in [0,1)$ be a probability threshold, and $\diamondsuit \in \{a,d\}$ denote either angelic or demonic satisfaction. A {\em parametric stochastic barrier function (parametric SBF)} for $X$ with respect to $I$ and $\diamondsuit$ is a measurable function $V: \States \times \ParamSet \rightarrow \mathbb{R}$, which is required to satisfy the following four conditions:
	\begin{compactenum}
		\item {\em Nonnegativity.} $V(s,u) \geq 0$ for all $s \in I$ and $u \in U$.
		\item {\em Initial.} $V(s,u) \leq 1$ for all $s \in \Init$ and $u \in U$.
		\item {\em Safety.} $V(s,u) \geq \frac{1}{1-p}$ for all $s \in I \backslash X$ and $u \in U$.
		\item {\em Expected decrease.} We distinguish between angelic and demonic satisfaction:
		\begin{compactenum}
			\item {\em Angelic satisfaction, i.e.~$\diamondsuit = a$.} There exists a parametric strategy $\sigma: \ParamSet \times \States \rightarrow \Actions$ in $\MDP$ such that, for every parameter valuation $u \in U$ and for every state $s \in I$ with $V(s,u) \leq 1/(1-p)$, $V$ satisfies expected decrease for the action $a = \sigma[u](s)$, i.e.
			\begin{equation*}
			\begin{split}
			\exists \strategy \in \StrategySpace^\MDP.\,\,\,  \forall u \in \ParamSet.\,\,\, \forall s \in I.\,\,\, &V(s,u) \leq \frac{1}{1-p} \\
			&\Rightarrow V(s,u)  \geq \mathbb{E}_{s' \sim \kernel(\cdot \mid u,s,\sigma[u](s))}[V(s',u)].
			\end{split}
			\end{equation*}
			\item {\em Demonic satisfaction, i.e.~$\diamondsuit = d$.} For every parametric strategy $\sigma: \ParamSet \times \States \rightarrow \Actions$ in $\MDP$ and for every parameter valuation $u \in U$ and state $s \in I$ with $V(s,u) \leq 1/(1-p)$, we have that $V$ satisfies expected decrease, i.e.
				\begin{equation*}
				\begin{split}
					\forall \strategy \in \StrategySpace^\MDP.\,\,\,  \forall u \in \ParamSet.\,\,\, \forall s \in I.\,\,\, &V(s,u) \leq \frac{1}{1-p} \\
					&\Rightarrow V(s,u)  \geq \mathbb{E}_{s' \sim \kernel(\cdot \mid u,s,\sigma[u](s))}[V(s',u)].
				\end{split}
			\end{equation*}
		\end{compactenum}
	\end{compactenum}
\end{definition}

The following theorem shows that parametric SBFs provide a sound proof rule for proving probability~$p \in [0,1)$ safety in pMDPs. The proof proceeds by showing that the parametric SBF induces a non-parametric SBF in the parameter flattened MDP and then using the soundness of SBFs for proving probability $p \in [0,1)$ safety~\cite{PrajnaJP07}, while taking additional care to correctly handle the cases of angelic and demonic satisfaction. We defer it to Appendix~\ref{app:psbf}. 

\begin{theorem}[Soundness of parametric SBFs]\label{thm:psbf}
	Let $\MDP$ be a pMDP, $I$ be an invariant, $X$ be a set of safe states, $p \in [0,1)$, and $\diamondsuit \in \{a,d\}$ denote either angelic or demonic satisfaction. Suppose that there exists a parametric SBF for $V$ with respect to $I$ and $\diamondsuit$. Then, we have $\MDP \models_\diamondsuit \Safe(X), p$.
\end{theorem}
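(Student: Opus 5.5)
The plan is to reduce to the non-parametric case via Theorem~\ref{thm:transformation}. We fix a strategy, use $V$ itself as a non-parametric SBF on the flattened chain, and invoke the soundness result of~\cite{PrajnaJP07}.

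First we define the objects in the flattened model. Let $\tilde{I} = I \times \ParamSet$. Every reachable state of $\tilde{\MDP}$ has the form $(s,u)$, where $s$ is reachable in $\MDP[u]$, because the flattened kernel keeps $u$ fixed and $\tilde{\Init} = \Init \times \ParamSet$. Since $I$ is an invariant for the pMDP, containing the reachable states for every $u$, the set $\tilde{I}$ is an invariant of $\tilde{\MDP}$. The flattened safe set is $\tilde{X} = X \times \ParamSet$. The flattened specification $\widetilde{\Safe(X)}$ then coincides with $\Safe(\tilde{X})$ in $\tilde{\MDP}$. We view $V$ as a measurable map $\tilde{\States} \to \mathbb{R}$. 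Conditions (1)--(3) of Definition~\ref{def:psbf} are then exactly conditions (C1)--(C3) for $V$ on $\tilde{\MDP}$ with respect to $\tilde{I}$, $\tilde{\Init}$ and $\tilde{X}$.

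Next we fix the strategy. In the angelic case, let $\sigma$ be the parametric strategy witnessing condition 4(a). In the demonic case, let $\sigma$ be an arbitrary parametric strategy. In both cases, condition 4 gives the following for every $(s,u) \in \tilde{I}$ with $V(s,u) \le 1/(1-p)$:
\[ V(s,u) \ge \mathbb{E}_{s' \sim \kernel(\cdot\mid u,s,\sigma[u](s))}[V(s',u)] = \mathbb{E}_{(s',u') \sim \tilde{\kernel}(\cdot\mid (s,u),\tilde{\sigma}(s,u))}[V(s',u')]. \]
The equality holds because $\tilde{\kernel}((s,u),a,\cdot)$ is the product of $\kernel(u,s,a,\cdot)$ with the Dirac measure at $u$. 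Thus $V$ is an SBF, condition (C4), for the Markov chain induced by $\tilde{\MDP}$ under $\tilde{\sigma}$. The soundness of SBFs~\cite{PrajnaJP07} then yields $\mathbb{P}^{\tilde{\MDP},\tilde{\sigma}}_{(s,u)}[\Safe(\tilde{X})] \ge p$ for all $(s,u) \in \tilde{\Init}$.

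Finally, Proposition~\ref{prop:eq} turns this into $\mathbb{P}^{\MDP[u],\sigma[u]}_s[\Safe(X)] \ge p$ for all $u \in \ParamSet$ and $s \in \Init$. In the angelic case this holds for the witnessing $\sigma$, giving $\MDP \models_a \Safe(X),p$. In the demonic case it holds for every $\sigma$, giving $\MDP \models_d \Safe(X),p$. Equivalently, one can conclude $\tilde{\MDP} \models_\diamondsuit \tilde{\phi},p$ and apply Theorem~\ref{thm:transformation}.

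The main obstacle is the measure-theoretic bookkeeping, not the inequalities. Three points need checking. First, the flattened strategy $\tilde{\sigma}$ must be measurable on $\States \times \ParamSet$; this follows from measurability of $\sigma$ up to swapping coordinates. Second, the flattened kernel as given in Definition~\ref{def:parameterflattening} on rectangles must extend to a genuine kernel on the product $\sigma$-algebra, so that the identity of expectations above is justified. Third, the SBF soundness argument of~\cite{PrajnaJP07} must apply with general measurable spaces and with (C4) required only on the region $V \le 1/(1-p)$. This third point is handled in the standard way: stop the process upon leaving $\tilde{X}$ or exceeding $1/(1-p)$, observe that the stopped value is a nonnegative supermartingale, and apply Ville's maximal inequality to get a violation probability of at most $V(s_0,u)(1-p) \le 1-p$.
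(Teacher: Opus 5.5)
Your proposal is correct and follows essentially the same route as the paper's proof. You show that $V$ is a non-parametric SBF for $X \times \ParamSet$ with respect to $I \times \ParamSet$ in the chain induced by $\tilde{\MDP}$ and the flattened strategy (the angelic witness, or an arbitrary strategy in the demonic case), invoke the soundness of SBFs from \cite{PrajnaJP07}, and transfer back via Proposition~\ref{prop:eq} or Theorem~\ref{thm:transformation}. The extra bookkeeping you add is sound and fills in details the paper leaves implicit: the invariance of $I \times \ParamSet$, the identification $\widetilde{\Safe(X)} = \Safe(X \times \ParamSet)$, and the stopped-supermartingale plus Ville argument covering expected decrease required only where $V \le 1/(1-p)$.
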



\begin{example}
	Consider the pMDP $\MDP$ in Example~\ref{ex:running} with the demonic safety specification $\phi = \Safe([-2,90])$, $\diamondsuit = d$, and probability threshold $p = 0.1$ as in Example~\ref{ex:problem}. Let $R = [-0.25, -0.125] \times \{1\}$ be a parameter region for which we want to prove that $\MDP^R \models_d \Safe([-2,90]), 0.1$. 
	An example of a parametric SBF and a supporting invariant computed by our prototype tool, which can be used to prove this statement, are given by $V(s,u_1,u_2) = -2 \cdot 10^{-5} - u_1 + 0.011 \cdot s$ and $I = [-2,150]$, i.e~the invariant is equal to the whole set of states. One can verify by inspection that $V$ satisfies all the defining conditions in Definition~\ref{def:psbf}. Hence, by Theorem~\ref{thm:psbf}, it provides a proof that $\MDP^R \models_d \Safe([-2,90]), 0.1$.
\end{example}

\begin{remark}[Probability $1$ safety]
	Our parametric SBFs are defined and proved to be sound for $p \in [0,1)$
	safety. However, for $p=1$, parametric SBFs become ill defined due to the
	quotient $\frac{1}{1-p}$ in their defining conditions. This issue already
	arises in the non-parametric setting, and is not an artifact of pMDPs. The
	more conceptual issues regarding the unsoundness of SBFs when $p = 1$ are
	discussed in~\cite{SoCF23}. In practice, to achieve soundness for $p=1$, one
	can strengthen almost-sure safety to sure safety and use classical
	(non-probabilistic) barrier functions. Their instantiation in the setting of
	pMDPs is achieved similarly to parametric SBFs in Definition~\ref{def:psbf},
	by considering the parameter flattened MDP.
\end{remark}

\subsection{Parametric Supermartingales for Reachability
Specifications}\label{sec:parametricrasm}

We now present a parametric spermartingale certificate generalization of {\em reach-avoid supermartingales (RASMs)} for proving probability $p \in [0,1)$ (a.k.a.~quantitative) reachability and reach-avoidance specifications~\cite{ZikelicLHC23}. Here we consider reachability since our focus in the experiments will be on reachability specifications, however an extansion to reach-avoid specifications can be achieved by including an additional safety condition analogous to condiiton~$3$ in Definition~\ref{def:psbf}.

We first recall (non-parametric) RASMs. Intuitively, given a Markov chain $\MDP$, a set of target states $T$ that we want to prove the system reaches with probability at least $p$, and an invariant $I$, a RASM is a measurable function $V: \States \rightarrow \mathbb{R}$ that to each state assigns a real value that is required to satisfy the following four conditions: {\bf (C1)} $V$ is nonnegative at all states in the invariant $I$, {\bf (C2)} $V(s) \leq 1$ at all initial states $s$ in the MDP, and {\bf (C3)} at all non-target invariant states $s \in I \backslash T$ at which $V(s) \leq \frac{1}{1-p}$, the value of $V$ strictly decrease by at least $\epsilon > 0$ in expected value upon the one-step execution of the Markov chain. Thus, the value of the SBF is required to be nonnegative, initially be below $1$ and to strictly $\epsilon$-decrease in expectation until it either reaches the target set or a state with $V(s) \geq \frac{1}{1-p}$. It was shown in~\cite{ZikelicLHC23} that these conditions together ensure that the reachability specification is satisfied with probability at least $p$.

The following definition formalizes the above intuition and generalizes it to the setting of pMDPs by instantiating it in the state space of parameter flattened MDP. Since parametric RASMs are defined over the state space of parameter flattened MDP, they are defined as measurable functions of type $V: \States \times \ParamSet \rightarrow \mathbb{R}$ that map {\em states of the parameter flattened MDP} to real values.

\begin{definition}[Parametric RASMs]\label{def:prasm}
	Let $\MDP = (\States, \StatesSA, \Actions, \ActionsSA, \kernel, \Init, \ParamSet)$ be a pMDP, $I \in \StatesSA$ be an invariant, $T \in \StatesSA$ be a set of target states, $p \in [0,1)$ be a probability threshold, and $\diamondsuit \in \{a,d\}$ denote either angelic or demonic satisfaction. A {\em parametric reach-avoid supermartingale (parametric RASM)} for $T$ with respect to $I$ and $\diamondsuit$ is a measurable function $V: \States \times \ParamSet \rightarrow \mathbb{R}$, which is required to satisfy the following four conditions:
	\begin{compactenum}
		\item {\em Nonnegativity.} $V(s,u) \geq 0$ for all $s \in I$ and $u \in U$.
		\item {\em Initial.} $V(s,u) \leq 1$ for all $s \in \Init$ and $u \in U$.
		\item {\em Expected decrease.} We distinguish between angelic and demonic satisfaction:
		\begin{compactenum}
			\item {\em Angelic satisfaction, i.e.~$\diamondsuit = a$.} There exist $\epsilon > 0$ and a parametric strategy $\sigma: \ParamSet \times \States \rightarrow \Actions$ in $\MDP$ such that, for every parameter valuation $u \in U$ and for every non-target state $s \in I \backslash T$ with $V(s,u) \leq 1/(1-p)$, $V$ satisfies expected decrease for the action $a = \sigma[u](s)$, i.e.
			\begin{equation*}
				\begin{split}
					\exists \strategy \in \StrategySpace^\MDP.\,\,\,  \forall u \in \ParamSet.\,\,\, &\forall s \in I.\,\,\, s \in I \backslash T \land V(s,u) \leq \frac{1}{1-p} \\
					&\Rightarrow V(s,u)  \geq \mathbb{E}_{s' \sim \kernel(\cdot \mid u,s,\sigma[u](s))}[V(s',u)] + \epsilon.
				\end{split}
			\end{equation*}
			\item {\em Demonic satisfaction, i.e.~$\diamondsuit = d$.} There exists $\epsilon > 0$ such that for all parametric strategies $\sigma: \ParamSet \times \States \rightarrow \Actions$ in $\MDP$, for every parameter valuation $u \in U$ and for every non-target state $s \in I \backslash T$ with $V(s,u) \leq 1/(1-p)$, $V$ satisfies expected decrease for the action $a = \sigma[u](s)$, i.e.
			\begin{equation*}
				\begin{split}
					\forall \strategy \in \StrategySpace^\MDP.\,\,\,  \forall u \in \ParamSet.\,\,\, &\forall s \in I.\,\,\, s \in I \backslash T \land V(s,u) \leq \frac{1}{1-p} \\
					&\Rightarrow V(s,u)  \geq \mathbb{E}_{s' \sim \kernel(\cdot \mid u,s,\sigma[u](s))}[V(s',u)] + \epsilon.
				\end{split}
			\end{equation*}
		\end{compactenum}
	\end{compactenum}
\end{definition}

The following theorem shows that parametric RASMs provide a sound proof rule for probability~$p \in [0,1)$ reachability in pMDPs; the proof is in Appendix~\ref{app:prasm}. 

\begin{theorem}[Soundness of parametric RASMs]\label{thm:prasm}
	Let $\MDP$ be a pMDP, $I$ be an invariant, $T$ be a set of target states, $p \in [0,1)$ be a probability threshold, and $\diamondsuit \in \{a,d\}$ denote either angelic or demonic satisfaction. Suppose that there exists a parametric RASM for $T$ with respect to $I$ and $\diamondsuit$. Then,  $\MDP \models_\diamondsuit \Reach(T), p$.
\end{theorem}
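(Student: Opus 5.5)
The plan mirrors the proof of Theorem~\ref{thm:psbf}. We show that a parametric RASM $V$ for $T$ with respect to $I$ and $\diamondsuit$ induces a (non-parametric) RASM in the parameter flattened MDP $\tilde{\MDP}$ for the flattened target $\tilde{T} = T \times \ParamSet$. We then invoke soundness of RASMs from~\cite{ZikelicLHC23} and transfer the conclusion back to $\MDP$ via Proposition~\ref{prop:eq} (equivalently, Theorem~\ref{thm:transformation}).

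First, we check that $\tilde{I} = I \times \ParamSet$ is an invariant of $\tilde{\MDP}$. By Definition~\ref{def:parameterflattening}, the parameter component never changes along a path and initial states lie in $\Init \times \ParamSet$. Hence every reachable state $(s,u)$ of $\tilde{\MDP}$ has $s$ reachable in $\MDP[u]$. Here we use that $I$ is an invariant of every $\MDP[u]$, which is how we read ``invariant'' for a pMDP; if needed, we instead read the definition directly in terms of the flattened MDP.

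Next, we view $V$ as a function $\tilde{V}: \tilde{\States} \to \mathbb{R}$. It is measurable with respect to $\tilde{\StatesSA}$ because $V$ is measurable on $\States \times \ParamSet$. Nonnegativity on $\tilde{I}$ and the initial condition on $\tilde{\Init} = \Init \times \ParamSet$ are immediate restatements of conditions 1 and 2. For expected decrease, observe that for any $(s,u)$ and action $a$, the successor distribution $\tilde{\kernel}(\cdot \mid (s,u),a)$ is the pushforward of $\kernel(\cdot \mid u,s,a)$ under $s' \mapsto (s',u)$. Therefore
\[ \mathbb{E}_{(s',u') \sim \tilde{\kernel}(\cdot \mid (s,u),a)}[\tilde{V}(s',u')] = \mathbb{E}_{s' \sim \kernel(\cdot \mid u,s,a)}[V(s',u)]. \]
The computation is first done for product sets $B \times X$ and then extended to general integrands by the standard monotone class / simple-function argument.

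For the angelic case, we take the witnessing parametric strategy $\sigma$ and its flattening $\tilde{\sigma}(s,u) = \sigma[u](s)$, which is a measurable strategy in $\tilde{\MDP}$. The expected $\epsilon$-decrease then holds at every $(s,u) \in \tilde{I} \setminus \tilde{T}$ with $\tilde{V}(s,u) \le 1/(1-p)$. Hence $\tilde{V}$ is a RASM for the Markov chain induced by $\tilde{\sigma}$, and~\cite{ZikelicLHC23} gives $\mathbb{P}^{\tilde{\MDP},\tilde{\sigma}}_{(s,u)}[\Reach(\tilde{T})] \ge p$ for all $(s,u) \in \tilde{\Init}$. Since $\Reach(\tilde{T})$ is exactly $\widetilde{\Reach(T)}$, Proposition~\ref{prop:eq} yields $\mathbb{P}^{\MDP[u],\sigma[u]}_s[\Reach(T)] \ge p$ for all $u, s$. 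This is precisely~\eqref{eq:angelic}.

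For the demonic case, we fix an arbitrary parametric strategy $\sigma$. The demonic condition, with its $\epsilon$ chosen uniformly over strategies, gives the same decrease for $\tilde{\sigma}$, and we repeat the argument. Because $\sigma$ was arbitrary, this yields~\eqref{eq:demonic}.

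The main obstacle is the bookkeeping in the expectation identity and in matching the flattened specification. The subtle point is that $\tilde{\kernel}$ is only specified on product rectangles $B \times X$ in Definition~\ref{def:parameterflattening}. It must be extended uniquely to a kernel on $\tilde{\StatesSA}$ as the pushforward $\kernel(\cdot\mid u,s,a)\circ \iota_u^{-1}$ with $\iota_u(s') = (s',u)$; uniqueness follows by the $\pi$-$\lambda$ theorem since rectangles form a $\pi$-system. After that, the change-of-variables formula gives the displayed identity, and the rest is a direct application of the known non-parametric soundness result.
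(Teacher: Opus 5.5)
Your proposal is correct and follows the paper's proof. Like the paper, you show that $V$ is a non-parametric RASM for $T \times \ParamSet$ with respect to $I \times \ParamSet$ in the Markov chain induced on $\tilde{\MDP}$ by the flattened strategy, invoke RASM soundness from~\cite{ZikelicLHC23}, and transfer the bound back via Proposition~\ref{prop:eq}/Theorem~\ref{thm:transformation}, handling the angelic and demonic cases separately. Your extra care with the invariant, the extension of $\tilde{\kernel}$ from rectangles, and the expectation identity fills in details the paper leaves implicit, and fixing an arbitrary parametric strategy in the demonic case (rather than an arbitrary strategy of $\tilde{\MDP}$, as the paper does) is only a cosmetic difference.
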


\begin{example}
	Consider the pMDP $\MDP$ in Example~\ref{ex:running} with the angelic reachability specification $\phi = \Reach([90,150])$, $\diamondsuit = a$, and probability threshold $p = 0.9$ as in Example~\ref{ex:problem}. Let $R = [0.25, 0.5] \times \{1\}$ be a parameter region for which we want to prove that $\MDP^R \models_a \Reach([90,150]), 0.9$. 
	An example of a parametric RASM and a supporting invariant computed by our prototype tool, which can be used to prove this statement, are given by $V(s,u_1,u_2) = 0.5 \cdot u_1 - 3.5 \cdot 10^{-14} \cdot s$ and $I = \{s \in [-2,150] \mid -9.875 + 98.752 \cdot s \geq 0\}$. One can verify by inspection that $V$ satisfies all the defining conditions in Definition~\ref{def:prasm}. An existentially quantified parametric strategy, which is also computed by our tool, is defined via $\sigma(s,u_1,u_2) = 0.6$. Hence, by Theorem~\ref{thm:prasm}, it provides a proof that $\MDP^R \models_a \Reach([90,150]), 0.9$.
\end{example}
\section{Automated Parameter Verification and Synthesis}\label{sec:algo}

We now present our algorithms for solving the verification and approximate
synthesis problems defined in Section~\ref{sec:problem}. We first present our
algorithm for the verification problem, which will then be used as a subroutine
for the synthesis algorithm.
As a byproduct, our algorithms also produce a closed form expression for a parametric
strategy that witnesses specification satisfaction. To the best of our knowledge, this is novel with respect to the existing literature both on finite- and infinite-state pMDPs: existing works considered
the question of the \textit{existence} of concrete strategies in the verification
and synthesis problems~\cite{JungesAHJKQV24}, but how to obtain these strategies in a succinct form
was open.

Our algorithms are applicable to {\em polynomial}
pMDPs and parametric supermartingale certificates, which we formalize below. In
what follows, let $\MDP = (\States, \StatesSA, \Actions, \ActionsSA, \kernel,
\Init, \ParamSet)$ be a pMDP with $\ParamSet \subseteq \mathbb{R}^n$, $\phi$ a
specification, $p \in [0,1]$, and $\diamondsuit \in \{a,d\}$ a type of
specification satisfaction.

\subsection{Verification Algorithm}\label{sec:algoverification}

The goal of the verification problem is to formally prove that $M \models_\diamondsuit \phi,p$. Our verification algorithm proceeds by synthesizing a parametric supermartingale certificate $V$ with a supporting invariant $I$ which provides a sound proof rule for proving this statement. We aim for a unifying verification algorithm that can be used to synthesize a broad class of parametric supermartingale certificates. Thus, we first formalize the class of parametric supermartingale certificates that can be synthesized by our verification algorithm.

\smallskip\noindent{\bf Supported parametric supermartingale certificates.} Our verification algorithm can synthesize parametric generalizations of any supermartingale certificate whose defining conditions are all either {\em bound conditions} or {\em expectation conditions}. For a measurable function $V: \States \times \ParamSet \rightarrow \mathbb{R}$, we define these as follows:
\begin{compactenum}
	\item {\em Bound condition.} Consider a measurable set of states $X \in \StatesSA$ and a Borel-measurable set of real values $Y \subseteq \mathbb{R}$. We define the {\em $(X,Y)$-bound condition} to be the logical formula which encodes that, for all invariant states in the set $X$, the value of $V$ must lie in the set $Y$, i.e.
	\[ \forall u \in \ParamSet.\,\,\, \forall s \in I.\,\,\, s \in X \Rightarrow V(s,u) \in Y.  \]
	\item {\em Expectation condition.} Consider a measurable set of states $X \in \StatesSA$ and two Borel-measurable set of real values $Y,Z \subseteq \mathbb{R}$. We define the {\em $(X,Y,Z)$-expectation condition} to be the logical formula which encodes that, for all invariant states in the set $X$ at which the value of $V$ lies in the set $Y$, the difference between the value of $V$ and the expected value of $V$ upon the one-step execution of the Markov chain must lie in the set $Z$, i.e.
	\begin{equation*}
		\begin{split}
			\exists \sigma \in \StrategySpace^\MDP.\,\,\, \forall u \in \ParamSet.\,\,\, \forall s \in I.\,\,\, &s \in X \land V(s,u) \in Y \\
			&\Rightarrow V(s,u) - \mathbb{E}_{s' \sim \kernel(\cdot \mid u,s,\sigma[u](s))}[V(s')] \in Z,
		\end{split}
	\end{equation*}
	for angelic satisfaction, whereas for demonic satisfaction the existential quantification $\exists \sigma \in \StrategySpace^\MDP$ changes to the universal quantification $\forall \sigma \in \StrategySpace^\MDP$.
\end{compactenum}

Observe that the defining conditions both parametric SBFs and parametric RASMs are either bound conditions (all but the expectation decrease condition) or expectation conditions (the expectation decrease condition). This remains true for parametric generalizations of most other existing supermartingale certificates, as we discuss in Appendix~\ref{app:otherspecifications}. This includes Streett supermartingales~\cite{AbateGR24,AbateGR25} and limit-deterministic B\"uchi supermartingales~\cite{HenzingerMSZ25} for general $\omega$-regular specifications. Hence, in what follows, we present a unifying algorithmic framework for computing any parametric supermartingale certificate whose defining conditions are either bound conditions or expectation conditions.

\smallskip\noindent{\bf Assumptions.} Our algorithm is restricted to pMDPs and parametric supermartingale certificates represented in polynomial real arithmetic:
\begin{compactenum}
	\item We assume that the pMDP $\MDP$ is {\em polynomial}, meaning that:
	\begin{compactitem}
		\item The sets of parameter valuations $\ParamSet \subseteq \mathbb{R}^n$, states $\States \subseteq \mathbb{R}^{n_\States}$ and actions $\Actions \subseteq \mathbb{R}^{n_\Actions}$ are all subsets of finite-dimensional Euclidean spaces, represented as satisfiability sets of boolean combinations of finitely many polynomial inequalities over parameter variables $u_1,\dots,u_n$ (for $\ParamSet$), state variables $s_1,\dots,s_{n_\States}$ (for $\States$), and action variables $a_1,\dots,a_{n_\Actions}$ (for $\Actions$).
		\item The parametric stochastic kernel $\kernel$ can be represented in polynomial real arithmetic, that is, the probability distribution $\kernel(\cdot \mid u,s,a)$ induced by the stochastic kernel can be represented via a polynomial equation
		$ s' = \kernel(u_1,\dots,u_n,s_1,\dots,s_{n_\States},a_1,\dots,a_{n_\Actions},r_1,\dots,r_{n_r})$, 
		where $r_1,\dots,r_{n_r}$ are random variables sampled from given probability distributions that are assumed to be independent. We allow sampling from arbitrary discrete or continuous probability distributions. However, we assume that the first $D$ moments of distributions are known and available, where $D$ is the algorithm's {\em polynomial degree} parameter.
	\end{compactitem}
	\item We assume that the defining conditions of the parametric supermartingale certificate can be represented in polynomial real arithmetic. In particular, for each bound condition, we assume that $X$ and $Y$ can be represented as a satisfiability set of a boolean combination of finitely many polynomial inequalities over state variables $s_1,\dots,s_{n_\States}$ and parameters $u_1,\dots,u_n$. For each expectation condition, we assume that $X$, $Y$ and $Z$ can be represented as a satisfiability set of a boolean combination of finitely many polynomial inequalities over state variables $s_1,\dots,s_{n_\States}$ and parameters $u_1,\dots,u_n$. Note that these assumptions are clearly satisfied by parametric SBFs and RASMs.
\end{compactenum}

\smallskip\noindent{\bf Algorithm outline.} Our verification algorithm follows a template-based synthesis approach and reduces the computation of the parametric supermartingale certificate $V$ and the invariant $I$ to solving a system of polynomial real constraints. This is a standard procedure for synthesizing (non-parametric) supermartingale certificates, see e.g.~\cite{ChatterjeeFG16,ChatterjeeGMZ22,AbateGR25,HenzingerMSZ25}, hence we keep the exposition brief. In outline, the algorithm proceeds in three steps. In Step~1, the algorithm fixes a symbolic polynomial template for the parametric supermartingale certificate $V: \States \times \ParamSet \rightarrow \mathbb{R}$ and the invariant $I \subseteq \States$. In the case of angelic satisfaction, i.e.~$\diamondsuit = a$, the algorithm also fixes a template for the existentially quantified parametric strategy $\sigma: \States \times \ParamSet \rightarrow \Actions$. The templates are defined via symbolic polynomials of maximal polynomial degree $D$, which is an algorithm parameter. In Step~2, the algorithm collects a system of quantified polynomial constraints over the symbolic template variables that together entail the defining conditions of the parametric supermartingale certificate $V$ and of $I$ being a valid inductive invariant, by substituting the symbolic polynomial templates in the defining conditions. To handle quantification over the parametric strategy in the expectation condition, for the case of angelic satisfaction, i.e. $\diamond=a$, we remove the existential quantification over the parametric strategy and instead substitute the symbolic polynomial template into the parametric stochastic kernel. For the case of demonic satisfaction, i.e. $\diamond=d$, we remove the universal quantification over the parametric strategy and replace it with $\forall a \in A$, which slightly strengthens the parametric supermartingale certificate conditions and thus preserves soundness of our algorithm. 
In Step~3, this system of symbolic polynomial constraints is first reduced to solving a system of purely existentially quantified system of polynomial constraints by using existing techniques~\cite{ChatterjeeGGKSSZ25}, and finally solved by using an off-the-shelf SMT solver. Any solution gives rise to a valid instance of the parametric supermartingale certificate $V$ and the invariant $I$. We present the details of the verification algorithm and establish its soundness in Appendix~\ref{app:algoverification}.

\subsection{Synthesis Algorithm}\label{sec:algosynthesis}

We now present our approximate synthesis algorithm which, given an approximation parameter $c \in [0,1]$, partitions the parameter set $U$ into two disjoint parameter regions $R$ and $R'$ such that $M^R \models_\diamondsuit \phi,p$ and $M^{R'} \models_\clubsuit \neg\phi,1-p$ with $\clubsuit$ being the opposite type of satisfaction, where $(\mu(R) + \mu(R')) / \mu(U) \geq 1 - c$. Our synthesis algorithm uses the verification algorithm in Section~\ref{sec:algoverification} as a subroutine. To that end, we assume the subroutine $\Verify(\MDP,\phi,p,\diamondsuit)$ which either (1)~returns $\sat$ which stands for the statement $M \models_\diamondsuit \phi,p$ being formally proved, or (2)~returns $\unsat$ or $\unknown$, depending on the output of the SMT solver.

\smallskip\noindent{\bf Assumptions.} We make the same assumptions as in Section~\ref{sec:algoverification}. In addition, we assume that the parameter set $U \subseteq \mathbb{R}^n$ is bounded and covered by a hyperrectangular box $B$ (i.e.~a product of intervals). We assume that this box $B$ is provided. Note that the boundedness assumption is also necessary for the approximation synthesis problem to be well-defined, since we need $\mu(U) < \infty$.

\smallskip\noindent{\bf Algorithm.} The algorithm executes an abstraction refinement procedure which synthesizes two parameter regions $R$ and $R'$ as unions of disjoint hyperrectangular boxes. It initializes two empty sets of hyperrectangular boxes $\grid{R}$ and $\grid{R'}$, which will be gradually enlarged with verified hyperrectangular boxes until they together cover at least ratio $c$ of the total volume of $U$. We define $\volume(\grid{R}) = \sum_{b \in \grid{R}} \mu(b \cap U)$ to be the Lebesgue measure of the part of the parameter valuation set $U$ that it covers, and $\volume(\grid{R'})$ is defined analogously. Here, we assume that we have access to an oracle that can compute (or at least formally lower bound) $\mu(b \cap U)$ for any hyperrectangular box $b$. 
This assumption is made for simplicity, since volume computation is not our contribution. For the case when the parameter set $U$ is a hyperrectangle (as in our experiments), the computation is trivial and can be done exactly. For the case when $U$ is a polytope (i.e. expressible in linear real arithmetic), the computation can also be done exactly by using classical methods~\cite{beck2007computing}. Finally, for the most general case when $U$ is represented via polynomial real arithmetic, one can compute a lower bound up to arbitrary $\epsilon$-precision by using e.g. the method of~\cite{0001CFGMZ25}.

The algorithm proceeds as follows. It defines a queue $\mathtt{ToExplore}$ of hyperrectangular boxes to be explored, initialized to $\mathtt{ToExplore} = \{B\}$. It then implements the following abstraction refinement procedure:
\begin{compactenum}
	\item Check if $(\volume(\grid{R}) + \volume(\grid{R'})) / \mu(U) \geq 1 - c$. If yes, return $R$ and $R'$ to the user. Otherwise, proceed to Step~2.
	\item Pop the next box $b$ from the queue $\mathtt{ToExplore}$.
	\item Run $\Verify(\MDP^{U \cap b},\phi,p,\diamondsuit)$. If the output is $\sat$, add $b$ to $\grid{R}$.
	\item Run $\Verify(\MDP^{U \cap b},\neg\phi,1-p,\clubsuit)$. If the output is $\sat$, add $b$ to $\grid{R'}$.
	\item If neither of the two runs of $\Verify$ output $\sat$, divide $b$ along the longest side into two boxes $b_1$ and $b_2$. Add $b_1$ and $b_2$ to $\mathtt{ToExplore}$. Return to Step~1.
	
\end{compactenum}

The soundness of our approximation synthesis algorithm follows by the soundness of our verification algorithm, which ensures that at any point in the algorithm execution we have $M^R \models_\diamondsuit \phi,p$ and $M^{R'} \models_\clubsuit \neg\phi,1-p$. In addition, the termination condition ensures that $(\mu(R) + \mu(R')) / \mu(U) \geq 1 - c$ on output.

\section{Experiments}
\label{sec:experiments}

We implemented our algorithms in a prototype written in Python.\footnote{Public
URL: \url{https://github.com/kmallik/parametric-verification-and-control}}
In the back end, the quantified polynomial entailments collected in Step~2 of our verification algorithm are discharged to the third party tool PolyQEnt~\cite{chatterjee2025polyqent}. We use MathSAT5~\cite{CimattiGSS13} and Z3~\cite{MouraB08} for SMT solving. 
Experiments were run on a machine with 72 virtual cores (Xeon Gold 6154 @3GHz), 128 GB RAM (DDR4@2666 MHz).

We consider the pMDP described in Example~\ref{ex:running}, and the approximate synthesis problem for the angelic reachability (and, dually, demonic safety) specification defined in Example~\ref{ex:problem}.
We consider three different variants of the pMDP, namely (1)~with one purely additive parameter, i.e., $u_1=0$ but $u_2$ variable, written as $M^+$; (2)~with one purely multiplicative parameter, i.e., $u_2=1$ but $u_1$ variable, written as $M^\times$; and (3)~with both additive and multiplicative parameters, i.e., both $u_1$ and $u_2$ being variable, written as $M^{+,\times}$.

\begin{table}[!t]
	\vspace{-0.5cm}
	\caption{ 
	Results for different values of the approximation parameter
	$c$. Total time represents the total time the tool needed to reach the
	specified approximation level. Angelic and Demonic SAT time respective
	represent the mean and standard deviation of the time of those individual
	SMT queries that returned ``SAT.'' TO means timeout of the total runtime,
	which was set to be 900 \unit{\second}. While reporting the stats of the SMT
	calls, the remaining cases (total $-$ (SAT $+$ UNSAT $+$ inconclusive)) are
	timeouts (25 \unit{\second}).}
	\label{table:performance}
	\renewcommand{\arraystretch}{1.2} 
	\centering
	\begin{tabular}{
        |>{\centering\arraybackslash}p{1.6cm}
        |>{\centering\arraybackslash}p{0.8cm}
        |>{\centering\arraybackslash}p{0.9cm}
        |>{\centering\arraybackslash}p{0.9cm}
        |>{\centering\arraybackslash}p{0.8cm}
        |>{\centering\arraybackslash}p{0.8cm}
        |>{\centering\arraybackslash}p{0.7cm}
        |>{\centering\arraybackslash}p{0.7cm}
		|>{\centering\arraybackslash}p{1.8cm}
        |>{\centering\arraybackslash}p{1.8cm}|}
		\toprule
		&	\multirow{2}{*}{$c$}		&	\multicolumn{2}{c|}{Total time
		(\unit{\second})}	&	\multicolumn{2}{c|}{\shortstack{Av.\ time (\unit{\second})\\ ang.\ SAT
		}}	&	\multicolumn{2}{c|}{\shortstack{Av.\ time (\unit{\second})\\ dem.\ SAT
		}} &	\multicolumn{2}{c|}{\shortstack{\# SMT calls \\ Tot.-SAT-UNSAT-Inconl. }}\\
		\cline{3-10}
		&	&	Z3	&	MS	&	Z3	&	MS &	Z3	&	MS & Z3 & MS \\
		\midrule
		\multirow{5}{1.5cm}{$M^{+}$: $u_1=0$, $u_2\in [-1,1]$,
		template unknowns = 9}	&
		$0.4$	&	138.48	& 45.51		&	$ 8.16$	&	$ 4.92$	 &	$ 0.19$ & $
		0.35$ & 22k-18-22k-17 & 22-3-0-13\\
		& $0.3$	&	310.30	& 45.62		&	$ 6.04$	&	$ 2.67$ &	$ 0.19$	& $ 0.35
		 $	 & 52k-20-52k-20 & 22-4-0-14\\
		& $0.2$	&	TO	&	60.49	&	-	&	$ 2.67$ &	-	& $ 0.30
		 $	 & - & 30-5-0-19\\
		& $0.15$	&	TO	&	90.90	&	-	&	$ 1.52$ &	-	&$
		0.30$	 & - & 62-7-0-37\\
		& $0.125$	&	TO	&	TO	&	-	& -	 &	-	& -	 & - & -\\
		\midrule	
		\multirow{5}{1.5cm}{$M^{\times}$: $u_1\in [-1,1]$, $u_2=1$,
		template unknowns = 9}	& $0.4$	&
		57.47	&	75.38	&	$ 11.54$	& $ 13.69$	 &
		$ 7.11$	& $ 11.75$	 & 22-3-8-8 & 38-4-0-22 \\
		& $0.3$	&	59.14	&	89.75	&	$ 11.88$	& $ 10.90  $	 &	$ 7.11$	& $ 11.75$	 & 22-4-8-8 & 62-7-0-37\\
		& $0.2$	&	73.87	&	117.58	&	$ 11.90$	& $ 8.87  $	 &	$ 7.11$	& $ 8.68$ & 34-5-13-10	 & 154-14-0-90\\
		& $0.15$	&	83.27	&	136.08	&	$ 9.72$	& $ 9.17  $	 &	$ 6.76$	& $ 8.29$ & 46-7-18-14	 & 218-21-0-137\\
		& $0.125$	&	83.32	&	153.65	&	$ 9.72$	&$ 8.07  $	 &	$ 6.76$	& $ 8.22$ & 46-7-18-14	 & 298-31-0-209\\
		\midrule	
		\multirow{5}{1.5cm}{$M^{+,\times}$: $u_1=0$, $u_2\in [-1,1]$,
		template unknowns = 11 }	& $0.4$	&	214.01	&	76.56	&	$ 1.60
		 $	& $ 0.61$	 &	 $ 0.23$	& $ 0.57  $	 & 2k-28-1k-325 & 86-10-0-52\\
		& $0.3$	&	TO	&	91.74	&	-	& $ 0.64$	 &	-	& $ 0.61
		 $	 & - & 134-15-0-81\\
		& $0.2$	&	TO	&	122.26	&	-	&  $ 1.37$	 &	-	& $
		0.72$	 & - & 270-27-0-181\\
		& $0.15$	&	TO	&	144.18	&	-	&	$ 2.47$ &	-	& $
		0.92$	 & - & 360-42-0-256\\
		& $0.125$	&	TO	&	177.40	&	-	&	$ 3.43$ &	-	&
		$ 2.03$ & - & 530-65-0-403\\
		\bottomrule	
	\end{tabular}
	\vspace{-1em}
\end{table}

Table~\ref{table:performance} reports the results of our experiments, with
further details (variance in computation times) being available in
Table~\ref{table:performance:full} in Appendix~\ref{app:full table}. For each
of the three variants of the pMDP, we report  results for multiple values for
the approximation parameter $c$. For every call to the verification subroutine,
timeout is set to $t_{\max} = 25\, s$. 
Figure~\ref{fig:winning regions plot} visualizes the computed parameter regions. 
From the results in Table~\ref{table:performance}, we observe that $M^+$ and $M^{+,\times}$, the SMT solver MathSAT5 significantly outperforms Z3, whereas for $M^\times$, the trend is opposite.
This suggests that different SMT solvers have their own strengths and weaknesses, and choosing the right solver has significant impact on performance.
On the other hand, Figure~\ref{fig:winning regions plot} shows how our refinement algorithm is able to solve the approximate synthesis problem with reasonable accuracy.
Observe that the granularity of the cells is the finest near the frontier between the two parameter regfions, which is expected.


\begin{figure}[t]

	\begin{subcolumns}
	  \subfloat[$M^+$, $c=0.125$]{\includegraphics[scale=0.22]{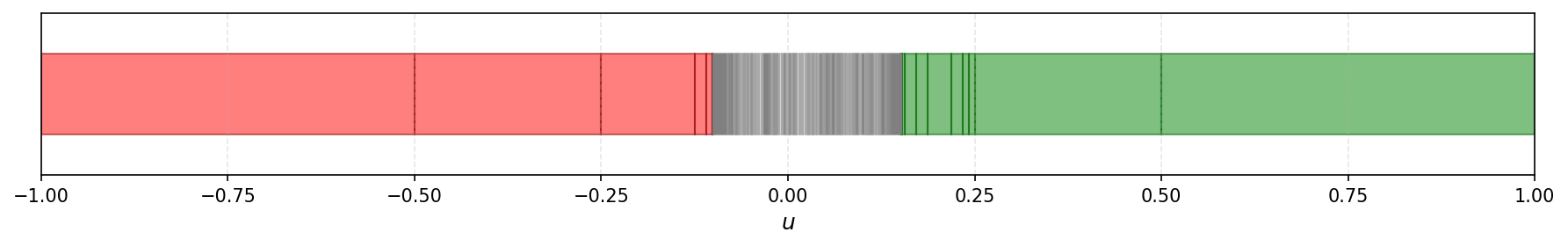}}
\nextsubfigure
  \subfloat[$M^\times$, $c=0.125$]{\includegraphics[scale=0.22]{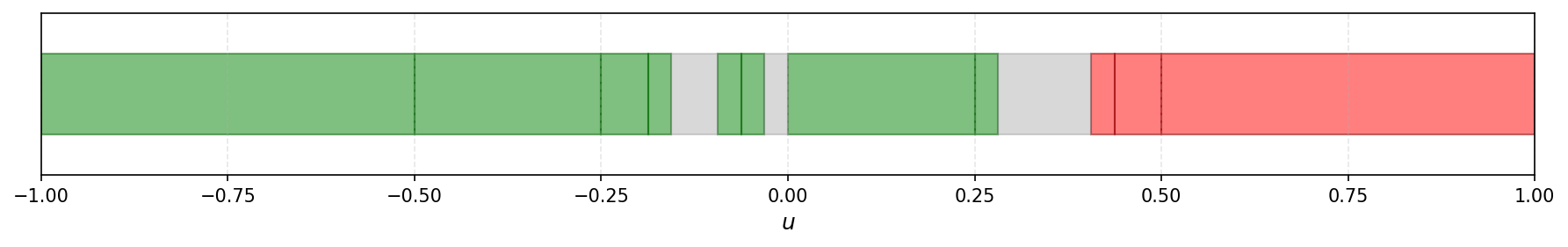}}
  \nextsubcolumn
  \qquad
  \subfloat[$M^{+,\times}$, $c=0.15$]{\includegraphics[scale=0.15]{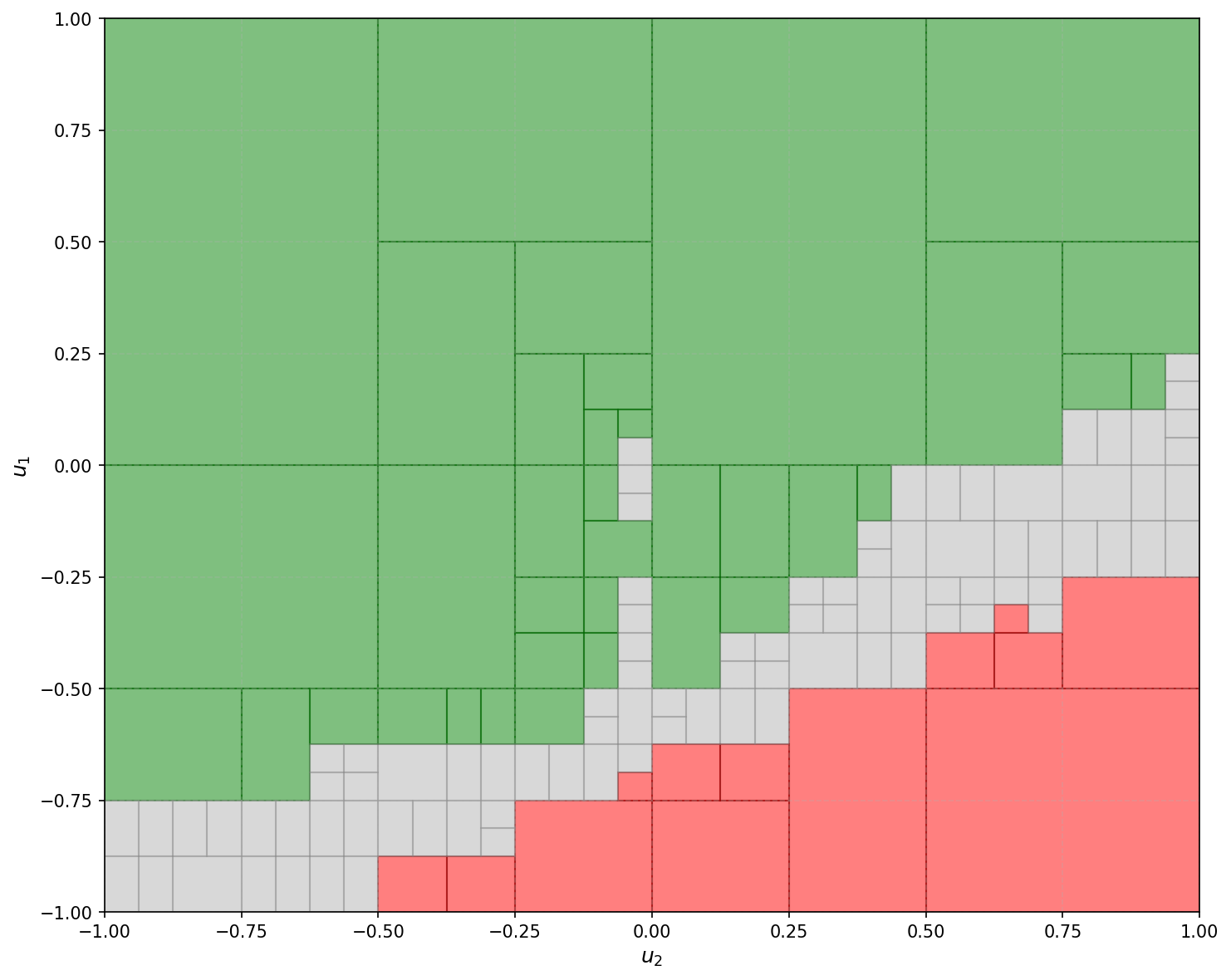}}
\end{subcolumns}
\caption{Partition of the parameter space based on the output of the approximate synthesis procedure. The green region is the angelic reachability region, the red region is the demonic safety region, and the grey region is the inconclusive region.}
\label{fig:winning regions plot}
\end{figure}

%
%
%

\section{Conclusion}

We introduced a novel automated framework for verification and approximate synthesis in infinite-state pMDPs based on parametric supermartingale certificates. By combining parameter flattening with template-based synthesis, our approach enables formal reasoning about rich specifications under parametric uncertainty.
In future, we plan to extend our methods to quantitative properties and develop heuristics to mitigate current scalability bottlenecks.

\section*{Acknowledgements} 

This research is supported by the NTU Start Up Grant and the project RYC2024-049116-I funded by MICIU/AEI/10.13039/501100011033 and the ESF+. Part of the work was done while {\DJ}or{\dj}e \v{Z}ikeli\'{c} was at Singapore Management University.

\bibliographystyle{splncs04}
\bibliography{bibliography}

\newpage
\appendix
\section{Proof of Proposition~\ref{prop:eq}}\label{app:propositionproof}

Let $\sigma$ be a parametric strategy and $\phi$ a specification in the parametric MDP, with the corresponding parameter flattened strategy $\tilde{\sigma}$ and the parameter flattened specification $\tilde{\phi}$. Let $s \in \Init$ be an initial state and $u \in U$ be a parameter valuation. To prove the proposition claim, we need to show
\[ \mathbb{P}_s^{M[u], \sigma[u]}[\phi] = \mathbb{P}_{(s,u)}^{\tilde{M},\tilde{\sigma}}[\tilde{\phi}]. \]
In what follows, denote by $(\Path^{M[u]}_s, \mathcal{F}_{\Path^{M[u]}_s}, \mathbb{P}_s^{M[u], \sigma[u]})$ the probability space induced by the MDP $M[u]$ with the initial state $s$ under strategy $\sigma[u]$, and denote by $(\Path^{\tilde{M}}_{(s,u)}, \mathcal{F}_{\Path^{\tilde{M}}_{(s,u)}}, \mathbb{P}_{(s,u)}^{\tilde{M},\tilde{\sigma}})$ the probability space induced by the parameter flattened MDP $\tilde{M}$ with the initial state $(s,u)$ under the parameter flattened strategy $\tilde{\sigma}$. Recall, the probability spaces are formally defined via the cylinder construction~\cite{ash2000probability}.

The proof proceeds in four steps:
\begin{enumerate}
	\item {\em Bijection between paths.} We construct a bijection $\Phi: \Path^{M[u]}_s \rightarrow \Path^{\tilde{M}}_{(s,u)}$ between the sets of all infinite paths in the two MDPs. The bijection is defined as follows (also defined in p.9 in the paper)
	\[ \Phi(s,a_0,s_1,a_1,\dots) = (s,u), a_0, (s_1,u), a_1, \dots \]
	with inverse
	\[ \Phi^{-1}((s,u), a_0, (s_1,u), a_1, \dots) = s,a_0,s_1,a_1,\dots. \]
	We can easily see that $\Phi^{-1} \circ \Phi$ and $\Phi \circ \Phi^{-1}$ are the identity functions over the sets of infinite paths in the two MDPs, hence, they are indeed bijections and inverses of each other.
	\item {\em Bijection between measurable sets of paths.} We now show that $\Phi$ gives rise to a bijection $\Phi^{\mathcal{F}}: \mathcal{F}_{\Path^{M[u]}_s} \rightarrow \mathcal{F}_{\Path^{\tilde{M}}_{(s,u)}}$ between the two $\sigma$-algebras of infinite paths of the two MDPs. For every measurable set of paths $B \in \mathcal{F}_{\Path^{M[u]}_s}$, define
	\[ \Phi^{\mathcal{F}}(B) = \{\Phi(\rho) \mid \rho \in B\}. \]
	We show that $\Phi^{\mathcal{F}}$ is a bijection with the inverse $(\Phi^{\mathcal{F}})^{-1}: \mathcal{F}_{\Path^{\tilde{M}}_{(s,u)}} \rightarrow \mathcal{F}_{\Path^{M[u]}_s}$ defined via
	\[ (\Phi^{\mathcal{F}})^{-1}(\tilde{B}) = \{ \Phi^{-1}(\tilde{\rho}) \mid \tilde{\rho} \in \tilde{B} \} \]
	for every $\tilde{B} \in \mathcal{F}_{\Path^{\tilde{M}}_{(s,u)}}$. The fact that $(\Phi^{\mathcal{F}})^{-1} \circ \Phi^{\mathcal{F}}$ and $\Phi^{\mathcal{F}} \circ (\Phi^{\mathcal{F}})^{-1}$ are the identity functions over the two sigma algebras easily follows from the fact that $\Phi$ and $\Phi^{-1}$ are inverses. Hence, $\Phi^{\mathcal{F}}$ and $(\Phi^{\mathcal{F}})^{-1}$ are indeed bijections and inverses of each other.
	\item {\em Bijection is measure preserving.} Next, we show that the bijection $\Phi^{\mathcal{F}}$ over $\sigma$-algebras is measure preserving, i.e. that
	\[ \mathbb{P}_s^{M[u], \sigma[u]}[B] = \mathbb{P}_{(s,u)}^{\tilde{M},\tilde{\sigma}}[\Phi^{\mathcal{F}}(B)] \]
	holds for all $B \in \mathcal{F}_{\Path^{M[u]}_s}$. To prove this, observe first that $\mathbb{P}_s^{M[u], \sigma[u]}[\cdot]$ and $\mathbb{P}_{(s,u)}^{\tilde{M},\tilde{\sigma}}[\Phi^{\mathcal{F}}(\cdot)]$ are two probability measures over the measurable spaces $(\Path^{M[u]}_s, \mathcal{F}_{\Path^{M[u]}_s})$ and $(\Path^{\tilde{M}}_{(s,u)}, \mathcal{F}_{\Path^{\tilde{M}}_{(s,u)}})$, respectively. Thus, we want to prove that these two probability measures are equal on all $B \in \mathcal{F}_{\Path^{M[u]}_s}$. To prove this, it suffices to prove that they are equal on a set of events that generates the $\sigma$-algebra $\mathcal{F}_{\Path^{M[u]}_s}$ (see e.g.~\cite[Section~1.6]{williams1991probability}, where it is shown that any two probability measures that agree on a $\pi$-system must agree on the whole $\sigma$-algebra generated by that $\pi$-system). In our case, the $\sigma$-algebra $\mathcal{F}_{\Path^{M[u]}_s}$ is generated by cylinder sets, i.e. sets of the form $Cyl(s,a_0,s_1,a_1,\dots,s_k) = \{\rho \in Path^{M[u]}_s \mid s,a_0,s_1,a_1,\dots,s_k \text{ is a prefix of } \rho \}$. Now, observe that 
	\begin{equation*}
		\begin{split}
			&\mathbb{P}_s^{M[u], \sigma[u]}[Cyl(s,a_0,s_1,a_1,\dots,s_k)] \\
			&= \mathbb{P}_{(s,u)}^{\tilde{M},\tilde{\sigma}}[Cyl((s,u),a_0,(s_1,u),a_1,\dots,(s_k,u)))] \\
			&= \mathbb{P}_{(s,u)}^{\tilde{M},\tilde{\sigma}}[\Phi^{\mathcal{F}}(Cyl(s,a_0,s_1,a_1,\dots,s_k))], 
		\end{split}
	\end{equation*}
	where the first equality follows from the cylinder construction and how the probability measures over the two MDPs are defined, whereas the second equality follows from our definition of the bijection $\Phi^{\mathcal{F}}$. Hence, the two probability measures agree on all cylinder sets which generate the $\sigma$-algebra $\mathcal{F}_{\Path^{M[u]}_s}$, hence they also must agree on all elements of the $\sigma$-algebra $\mathcal{F}_{\Path^{M[u]}_s}$. This concludes our proof that $\Phi^{\mathcal{F}}$ is measure preserving.
	
	\item {\em Proof of the proposition claim.} We are finally ready to prove the proposition claim, i.e. that $\mathbb{P}_s^{M[u], \sigma[u]}[\phi] = \mathbb{P}_{(s,u)}^{\tilde{M},\tilde{\sigma}}[\tilde{\phi}]$. First, recall that the parameter flattened specification $\tilde{\phi}$ is defined via
	\[ \tilde{\phi} = \{(s,u),a_0,(s_1,u),a_1,\dots \mid s,a_0,s_1,a_1,\dots \in \phi \land u \in U \}.  \]
	Hence, we have $\tilde{\phi} = \Phi^{\mathcal{F}}(\phi)$. Now, by Step 3 above $\Phi^{\mathcal{F}}$ is measure preserving, so it follows that $\mathbb{P}_s^{M[u], \sigma[u]}[\phi] = \mathbb{P}_{(s,u)}^{\tilde{M},\tilde{\sigma}}[\tilde{\phi}]$, which proves the claim.
\end{enumerate}
\qed

\section{Proof of Theorem~\ref{thm:transformation}}\label{app:transformation}

We distinguish between the cases of angelic and demonic satisfaction.

\smallskip\noindent{\em Angelic satisfaction.}  Suppose that $\diamondsuit = a$. We have
\[ \MDP \models_a \phi,p \quad \equiv \quad \exists \sigma \in \StrategySpace^\MDP.\,\,\, \forall u \in U.\,\,\, \forall s \in \Init.\,\,\, \mathbb{P}_s^{\MDP[u],\sigma[u]}[\phi] \geq p. \]
On the other hand, for the flattened MDP $\tilde{\MDP}$, the parameter valuations are now incorporated into the state space so angelic satisfaction reduces to
\[ \tilde{\MDP} \models_a \tilde{\phi},p \quad \equiv \quad \exists \sigma \in \StrategySpace^{\tilde{\MDP}}.\,\,\, \forall (s,u) \in \Init \times \ParamSet.\,\,\, \mathbb{P}_{(s,u)}^{\tilde{\MDP},\tilde{\sigma}}[\tilde{\phi}] \geq p. \]
Since each parametric strategy in $\MDP$ induces a parameter flattened strategy $\tilde{\sigma}$ in the parameter flattened MDP $\tilde{\MDP}$ and vice versa, and since by Proposition~\ref{prop:eq} we have that $\mathbb{P}_s^{\MDP[u],\sigma[u]}[\phi] \geq p$ if and only if $\mathbb{P}_{(s,u)}^{\tilde{\MDP},\tilde{\sigma}}[\tilde{\phi}] \geq p$, we conclude that $\MDP \models_a \phi,p$ if and only if $\tilde{\MDP} \models_a \tilde{\phi},p$.

\smallskip\noindent{\em Demonic satisfaction.} Suppose that $\diamondsuit = d$. We have
\[ \MDP \models_d \phi,p \quad \equiv \quad \forall \sigma \in \StrategySpace^\MDP.\,\,\, \forall u \in U.\,\,\, \forall s \in \Init.\,\,\, \mathbb{P}_s^{\MDP[u],\sigma[u]}[\phi] \geq p. \]
On the other hand, for the flattened MDP $\tilde{\MDP}$, the parameter valuations are now incorporated into the state space so demonic satisfaction reduces to
\[ \tilde{\MDP} \models_d \tilde{\phi},p \quad \equiv \quad \forall \sigma \in \StrategySpace^{\tilde{\MDP}}.\,\,\, \forall (s,u) \in \Init \times \ParamSet.\,\,\, \mathbb{P}_{(s,u)}^{\tilde{\MDP},\tilde{\sigma}}[\tilde{\phi}] \geq p. \]
Since each parametric strategy in $\MDP$ induces a parameter flattened strategy $\tilde{\sigma}$ in the parameter flattened MDP $\tilde{\MDP}$ and vice versa, and since by Proposition~\ref{prop:eq} we have that $\mathbb{P}_s^{\MDP[u],\sigma[u]}[\phi] \geq p$ if and only if $\mathbb{P}_{(s,u)}^{\tilde{\MDP},\tilde{\sigma}}[\tilde{\phi}] \geq p$, we conclude that $\MDP \models_d \phi,p$ if and only if $\tilde{\MDP} \models_d \tilde{\phi},p$. \qed

\section{Proof of Theorem~\ref{thm:psbf}}\label{app:psbf}

Suppose that there exists a parametric SBF $V$ for $X$ with respect to $I$ and $\diamondsuit$. We distinguish between angelic and demonic satisfaction.

\smallskip\noindent{\em Angelic satisfaction.} Consider first the case of~$\diamondsuit = a$. Let $\sigma$ be a parametric strategy whose existence is ensured by the Expected decrease condition for angelic satisfaction, and let $\tilde{\sigma}$ be the parameter flattened strategy of $\sigma$. Then, by the defining conditions of parametric SBFs in Definition~\ref{def:psbf}, it follows that $V: \States \times \ParamSet \rightarrow \mathbb{R}$ is an SBF for the safe set $X \times \ParamSet$ with respect to the invariant $I \times \ParamSet$ in the non-parametric Markov chain defined by the parameter flattened MDP $\tilde{\MDP}$ and the parameter flattened strategy $\tilde{\sigma}$. Hence, by the soundness of SBFs for proving probability~$p \in [0,1)$ safety in Markov chains~\cite{PrajnaJP07}, it follows that for all initial states $(s,u) \in \Init \times \ParamSet$ in the parameter flattened MDP $\tilde{\MDP}$, we have $\mathbb{P}_{(s,u)}^{\tilde{\MDP},\tilde{\sigma}}[\Safe(X \times \ParamSet)] \geq p$. Thus, we have $\tilde{\MDP} \models_a \Safe(X \times \ParamSet), p$. Hence, by Theorem~\ref{thm:transformation}, we conclude that $\MDP \models_a \Safe(X), p$, as claimed.

\smallskip\noindent{\em Demonic satisfaction.} Consider now the case of~$\diamondsuit = d$. Let $\tilde{\sigma}$ be an arbitrary strategy in the parameter flattened MDP $\tilde{\MDP}$. Then, by the defining conditions of parametric SBFs in Definition~\ref{def:psbf}, it follows that $V: \States \times \ParamSet \rightarrow \mathbb{R}$ is an SBF for the safe set $X \times \ParamSet$ with respect to the invariant $I \times \ParamSet$ in the non-parametric Markov chain defined by the parameter flattened MDP $\tilde{\MDP}$ and the parameter flattened strategy $\tilde{\sigma}$. Hence, by the soundness of SBFs for proving probability $p \in [0,1)$ safety in Markov chains~\cite{PrajnaJP07}, it follows that for all initial states $(s,u) \in \Init \times \ParamSet$ in the parameter flattened MDP $\tilde{\MDP}$, we have $\mathbb{P}_{(s,u)}^{\tilde{\MDP},\tilde{\sigma}}[\Safe(X \times \ParamSet)] \geq p$. Since the strategy $\tilde{\sigma}$ in $\tilde{\MDP}$ was arbitrary, we have $\tilde{\MDP} \models_d \Safe(X \times \ParamSet), p$. Hence, by Theorem~\ref{thm:transformation}, we conclude that $\MDP \models_d \Safe(X), p$, as claimed. \qed

\section{Proof of Theorem~\ref{thm:prasm}}\label{app:prasm}

Suppose that there exists a parametric RASM $V$ for $T$ with respect to $I$ and $\diamondsuit$. We distinguish between angelic and demonic satisfaction.

\smallskip\noindent{\em Angelic satisfaction.} Consider first the case of~$\diamondsuit = a$. Let $\sigma$ be a parametric strategy whose existence is ensured by the Expected decrease condition for angelic satisfaction, and let $\tilde{\sigma}$ be the parameter flattened strategy of $\sigma$. Then, by the defining conditions of parametric RASMs in Definition~\ref{def:prasm}, it follows that $V: \States \times \ParamSet \rightarrow \mathbb{R}$ is a RASM for the target set $T \times \ParamSet$ with respect to the invariant $I \times \ParamSet$ in the non-parametric Markov chain defined by the parameter flattened MDP $\tilde{\MDP}$ and the parameter flattened strategy $\tilde{\sigma}$. Hence, by the soundness of RASMs for proving probability~$p \in [0,1)$ reachability in Markov chains~\cite{ZikelicLHC23}, it follows that for all initial states $(s,u) \in \Init \times \ParamSet$ in the parameter flattened MDP $\tilde{\MDP}$, we have $\mathbb{P}_{(s,u)}^{\tilde{\MDP},\tilde{\sigma}}[\Reach(T \times \ParamSet)] \geq p$. Thus, we have $\tilde{\MDP} \models_a \Reach(T \times \ParamSet), p$. Hence, by Theorem~\ref{thm:transformation}, we conclude that $\MDP \models_a \Reach(T), p$, as claimed.

\smallskip\noindent{\em Demonic satisfaction.} Consider now the case of~$\diamondsuit = d$. Let $\tilde{\sigma}$ be an arbitrary strategy in the parameter flattened MDP $\tilde{\MDP}$. Then, by the defining conditions of parametric RASMs in Definition~\ref{def:prasm}, it follows that $V: \States \times \ParamSet \rightarrow \mathbb{R}$ is a RASM for the target set $T \times \ParamSet$ with respect to the invariant $I \times \ParamSet$ in the non-parametric Markov chain defined by the parameter flattened MDP $\tilde{\MDP}$ and the parameter flattened strategy $\tilde{\sigma}$. Hence, by the soundness of RASMs for proving probability $p \in [0,1)$ reachability in Markov chains~\cite{ZikelicLHC23}, it follows that for all initial states $(s,u) \in \Init \times \ParamSet$ in the parameter flattened MDP $\tilde{\MDP}$, we have $\mathbb{P}_{(s,u)}^{\tilde{\MDP},\tilde{\sigma}}[\Reach(T \times \ParamSet)] \geq p$. Since the strategy $\tilde{\sigma}$ in $\tilde{\MDP}$ was arbitrary, we have $\tilde{\MDP} \models_d \Reach(T \times \ParamSet), p$. Hence, by Theorem~\ref{thm:transformation}, we conclude that $\MDP \models_d \Reach(T), p$, as claimed. \qed

\section{Parametric Supermartingales for Probability~1 Reachability}\label{app:parametricrsm}

We now present a parametric spermartingale certificate generalization of {\em ranking supermartingaes (RSMs)} for proving probability~$1$ (a.k.a.~almost-sure) reachability~\cite{ChakarovS13}. We first recall (non-parametric) RSMs. Intuitively, given a Markov chain $\MDP$, a set of target states $T$ that we want to prove is reached with probability~$1$, and an invariant $I$ that over-approximates the set of all reachable states, an RSM is a measurable function $V: \States \rightarrow \mathbb{R}$ that to each state assigns a real value that is required to satisfy the following two conditions: {\bf (C1)}~$V$ is non-negative at all states in the invariant $I$, and {\bf (C2)}~at all non-target states, i.e.~states in $I \backslash T$, $V$ strictly decreases in expected value by at least $\epsilon > 0$ upon the one-step execution of the Markov chain.

The following definition formalizes the above intuition and generalizes it to the setting of parametric MDPs by instantiating it in the state space of parameter flattened MDP. Since parametric RSMs are defined over the state space of parameter flattened MDP, they are defined as measurable functions of type $V: \States \times \ParamSet \rightarrow \mathbb{R}$ that map {\em states of the parameter flattened MDP} to real values.

\begin{definition}[Parametric RSMs]\label{def:prsm}
	Let $\MDP = (\States, \StatesSA, \Actions, \ActionsSA, \kernel, \Init, \ParamSet)$ be a parametric MDP, $I \in \StatesSA$ be an invariant, $T \in \StatesSA$ be a set of target states, and $\diamondsuit \in \{a,d\}$ denote either angelic or demonic satisfaction. A {\em parametric ranking supermartingale (parametric RSM)} for $T$ with respect to $I$ and $\diamondsuit$ is a measurable function $V: \States \times \ParamSet \rightarrow \mathbb{R}$, which is required to satisfy the following four conditions:
	\begin{compactenum}
		\item {\em Nonnegativity.} $V(s,u) \geq 0$ for all $s \in I$ and $u \in U$.
		\item {\em Expected decrease.} We distinguish between angelic and demonic satisfaction:
		\begin{compactenum}
			\item {\em Angelic satisfaction, i.e.~$\diamondsuit = a$.} There exist $\epsilon > 0$ and a parametric strategy $\sigma: \ParamSet \times \States \rightarrow \Actions$ in $\MDP$ such that, for every parameter valuation $u \in U$ and for every non-target state $s \in I \backslash T$, $V$ satisfies expected decrease for the action $a = \sigma[u](s)$, i.e.
			\begin{equation*}
				\begin{split}
					\exists \strategy \in \StrategySpace^\MDP.\,\,\,  \forall u \in \ParamSet.\,\,\, &\forall s \in I.\,\,\, s \in I \backslash T \\
					&\Rightarrow V(s,u)  \geq \mathbb{E}_{s' \sim \kernel(\cdot \mid u,s,\sigma[u](s))}[V(s',u)] + \epsilon.
				\end{split}
			\end{equation*}
			\item {\em Demonic satisfaction, i.e.~$\diamondsuit = d$.} There exists $\epsilon > 0$ such that for all parametric strategies $\sigma: \ParamSet \times \States \rightarrow \Actions$ in $\MDP$, for every parameter valuation $u \in U$ and for every non-target state $s \in I \backslash T$, $V$ satisfies expected decrease for the action $a = \sigma[u](s)$, i.e.
			\begin{equation*}
				\begin{split}
					\forall \strategy \in \StrategySpace^\MDP.\,\,\,  \forall u \in \ParamSet.\,\,\, &\forall s \in I.\,\,\, s \in I \backslash T \\
					&\Rightarrow V(s,u)  \geq \mathbb{E}_{s' \sim \kernel(\cdot \mid u,s,\sigma[u](s))}[V(s',u)] + \epsilon.
				\end{split}
			\end{equation*}
		\end{compactenum}
	\end{compactenum}
\end{definition}

The following theorem shows that parametric RSMs provide a sound proof rule for proving probability~$1$ reachability in parametric MDPs. 

\begin{theorem}[Soundness of parametric RSMs]\label{thm:prsm}
	Let $\MDP$ be a parametric MDP, $I$ be an invariant, $T$ be a set of target states, and $\diamondsuit \in \{a,d\}$ denote either angelic or demonic satisfaction. Suppose that there exists a parametric RSM for $T$ with respect to $I$ and $\diamondsuit$. Then, we have $\MDP \models_\diamondsuit \Reach(T), 1$.
\end{theorem}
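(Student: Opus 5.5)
The proof follows the same pattern as the proofs of Theorems~\ref{thm:psbf} and~\ref{thm:prasm}. We show that a parametric RSM $V$ is a standard (non-parametric) RSM in a suitable Markov chain induced by the parameter flattened MDP $\tilde{\MDP}$. We then invoke the soundness of RSMs for almost-sure reachability~\cite{ChakarovS13} and transfer the conclusion back to $\MDP$ via Theorem~\ref{thm:transformation}. We treat angelic and demonic satisfaction separately.

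\emph{Angelic case ($\diamondsuit = a$).} Fix $\epsilon>0$ and the parametric strategy $\sigma$ given by the angelic expected decrease condition, and let $\tilde{\sigma}$ be its parameter flattened strategy. Consider the Markov chain obtained from $\tilde{\MDP}$ under $\tilde{\sigma}$, with target set $T\times\ParamSet$ and invariant $I\times\ParamSet$. We first check that $I\times\ParamSet$ is an invariant of $\tilde{\MDP}$. By the path bijection from the proof of Proposition~\ref{prop:eq}, every reachable state of $\tilde{\MDP}$ has the form $(s,u)$ with $s$ reachable in $\MDP[u]$, and hence $s\in I$. Nonnegativity of $V$ on $I\times\ParamSet$ is then immediate. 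For the expected decrease condition, the flattened kernel fixes the parameter component, so
\[
\mathbb{E}_{(s',u')\sim\tilde{\kernel}(\cdot\mid (s,u),\tilde{\sigma}(s,u))}[V(s',u')] = \mathbb{E}_{s'\sim \kernel(\cdot\mid u,s,\sigma[u](s))}[V(s',u)].
\]
Hence the $\epsilon$-decrease at every $(s,u)\in (I\setminus T)\times\ParamSet$ is exactly the assumed condition. Soundness of RSMs then gives $\mathbb{P}^{\tilde{\MDP},\tilde{\sigma}}_{(s,u)}[\Reach(T\times\ParamSet)]=1$ for all $(s,u)\in\Init\times\ParamSet$. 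Since $\Reach(T\times\ParamSet)$ is the flattened specification of $\Reach(T)$, we get $\tilde{\MDP}\models_a \widetilde{\Reach(T)},1$, and Theorem~\ref{thm:transformation} yields $\MDP\models_a\Reach(T),1$.

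\emph{Demonic case ($\diamondsuit = d$).} Fix the uniform $\epsilon$ from the definition and an arbitrary measurable strategy $\tilde{\sigma}$ in $\tilde{\MDP}$. Then $\sigma(u,s):=\tilde{\sigma}(s,u)$ is a measurable parametric strategy, since it is obtained by composing $\tilde{\sigma}$ with a coordinate swap, and its flattening is $\tilde{\sigma}$. The demonic condition applies to this $\sigma$, so the same computation shows that $V$ is an RSM in the chain induced by $\tilde{\sigma}$. We conclude as above, using that $\tilde{\sigma}$ was arbitrary.

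The main care point is bookkeeping rather than mathematics. It is important that $\epsilon$ is uniform over all parameter valuations and all strategies, which the definition guarantees, since the non-parametric RSM theorem requires a single $\epsilon$. One must also confirm the one-to-one correspondence between parametric and flattened strategies, which is needed for the demonic direction. Both of these are routine once the flattened kernel identity above is established.
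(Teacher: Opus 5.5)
Your proof is correct and follows the paper's argument. In both the angelic and demonic cases, the paper shows that $V$ is a non-parametric RSM for $T\times\ParamSet$ with respect to $I\times\ParamSet$ in the Markov chain induced by $\tilde{\MDP}$ and the flattened strategy (the one given by the angelic condition, or an arbitrary one in the demonic case), invokes soundness of RSMs~\cite{ChakarovS13}, and transfers the result back via Theorem~\ref{thm:transformation}. The paper leaves implicit the extra bookkeeping you supply: that $I\times\ParamSet$ is an invariant, the flattened-expectation identity, and measurability of $\sigma(u,s):=\tilde{\sigma}(s,u)$.
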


\begin{proof}
	Suppose that there exists a parametric RSM $V$ for $T$ with respect to $I$ and $\diamondsuit$. We distinguish between angelic and demonic satisfaction.
	
	\smallskip\noindent{\em Angelic satisfaction.} Consider first the case of~$\diamondsuit = a$. Let $\sigma$ be a parametric strategy whose existence is ensured by the Expected decrease condition for angelic satisfaction, and let $\tilde{\sigma}$ be the parameter flattened strategy of $\sigma$. Then, by the defining conditions of parametric RSMs in Definition~\ref{def:prsm}, it follows that $V: \States \times \ParamSet \rightarrow \mathbb{R}$ is an RSM for the target set $T \times \ParamSet$ with respect to the invariant $I \times \ParamSet$ in the non-parametric Markov chain defined by the parameter flattened MDP $\tilde{\MDP}$ and the parameter flattened strategy $\tilde{\sigma}$. Hence, by the soundness of RSMs for proving probability~1 reachability in Markov chains~\cite{ChakarovS13}, it follows that for all initial states $(s,u) \in \Init \times \ParamSet$ in the parameter flattened MDP $\tilde{\MDP}$, we have $\mathbb{P}_{(s,u)}^{\tilde{\MDP},\tilde{\sigma}}[\Reach(T \times \ParamSet)] = 1$. Thus, we have $\tilde{\MDP} \models_a \Reach(T \times \ParamSet), 1$. Hence, by Theorem~\ref{thm:transformation}, we conclude that $\MDP \models_a \Reach(T), 1$, as claimed.
	
	\smallskip\noindent{\em Demonic satisfaction.} Consider now the case of~$\diamondsuit = d$. Let $\tilde{\sigma}$ be an arbitrary strategy in the parameter flattened MDP $\tilde{\MDP}$. Then, by the defining conditions of parametric RSMs in Definition~\ref{def:prsm}, it follows that $V: \States \times \ParamSet \rightarrow \mathbb{R}$ is an RSM for the target set $T \times \ParamSet$ with respect to the invariant $I \times \ParamSet$ in the non-parametric Markov chain defined by the parameter flattened MDP $\tilde{\MDP}$ and the parameter flattened strategy $\tilde{\sigma}$. Hence, by the soundness of RSMs for proving probability~1 reachability in Markov chains~\cite{ChakarovS13}, it follows that for all initial states $(s,u) \in \Init \times \ParamSet$ in the parameter flattened MDP $\tilde{\MDP}$, we have $\mathbb{P}_{(s,u)}^{\tilde{\MDP},\tilde{\sigma}}[\Reach(T \times \ParamSet)] = 1$. Since the strategy $\tilde{\sigma}$ in $\tilde{\MDP}$ was arbitrary, we have $\tilde{\MDP} \models_d \Reach(T \times \ParamSet), 1$. Hence, by Theorem~\ref{thm:transformation}, we conclude that $\MDP \models_d \Reach(T), 1$, as claimed. \qed
\end{proof}

\section{Parametric Generalization of Other Supermartingale Certificates}\label{app:otherspecifications}

Our presentation so far has focused on parametric generalizations of established supermartingale certificates, such as RSMs, SBFs and RASMs, which serve as our illustrating examples throughout the paper. However, our generalization is applicable to all other supermartingale certificates for measurable Markov chains whose defining conditions are of one of the two types that we formalize below.

\smallskip\noindent{\bf Bound and expectation conditions.} Let $\MDP = (\States, \StatesSA, \Actions, \ActionsSA, \kernel, \Init)$ be a Markov chain and $I \in \StatesSA$ be an invariant. For a measurable function $V: \States \rightarrow \mathbb{R}$, we define bound conditions and expectation conditions as follows:
\begin{compactenum}
	\item {\em Bound condition.} Consider a measurable set of states $X \in \StatesSA$ and a Borel-measurable set of real values $Y \subseteq \mathbb{R}$. We define the {\em $(X,Y)$-bound condition} to be the logical formula which encodes that, for all invariant states in the set $X$, the value of $V$ must lie in the set $Y$, i.e.
	\[ \forall s \in I.\,\,\, s \in X \Rightarrow V(s) \in Y.  \]
	\item {\em Expectation condition.} Consider a measurable set of states $X \in \StatesSA$ and two Borel-measurable set of real values $Y,Z \subseteq \mathbb{R}$. We define the {\em $(X,Y,Z)$-expectation condition} to be the logical formula which encodes that, for all invariant states in the set $X$ at which the value of $V$ lies in the set $Y$, the difference between the value of $V$ and the expected value of $V$ upon the one-step execution of the Markov chain must lie in the set $Z$, i.e.
	\[ \forall s \in I.\,\,\, s \in X \land V(s) \in Y \Rightarrow V(s) - \mathbb{E}_{s'\in\kernel(\cdot \mid s)} \in Z.  \]
\end{compactenum}

Observe that the defining conditions of all supermartingale certificates considered in this work, namely RSMs, SBFs and RASMs, are either bound conditions (all but the expectation decrease condition) or expectation conditions (the expectation decrease condition). This remains true for many other supermartingale certificates considered in the literature as well. For instance, all defining conditions of repulsing supermartingales~\cite{ChatterjeeNZ17} which were also considered for proving quantiative reachability and safety, are either bound or expectation conditions. This is also true even the recently introduced Streett supermartingales~\cite{AbateGR24,AbateGR25} and limit-deterministic B\"uchi supermartingales~\cite{HenzingerMSZ25} for proving satisfaction of general $\omega$-regular specifications.

\smallskip\noindent{\bf Generalization to parametric MDPs.} Consider now a supermartingale certificate for measurable Markov chains whose all defining conditions are either bound or expectation conditions. We extend it to the setting of parametric MDPs as follows. Let $\MDP = (\States, \StatesSA, \Actions, \ActionsSA, \kernel, \Init, \ParamSet)$ be a parametric MDP, $I \in \StatesSA$ be an invariant, and $\diamondsuit \in \{a,d\}$ denote either angelic or demonic satisfaction. For each defining condition of the supermartingale certificate in Markov chains, we define its {\em parametric generalization} as follows:
\begin{compactenum}
	\item {\em Parametric generalization of bound conditions.} For a bound condition $\forall s \in I.\,\,\, s \in X \Rightarrow V(s) \in Y$, its parametric generalization remains unchanged, i.e.~we keep the condition $\forall s \in I.\,\,\, s \in X \Rightarrow V(s) \in Y$.
	\item {\em Parametric generalization of expectation conditions.} For an expectation condition $\forall s \in I.\,\,\, s \in X \land V(s) \in Y \Rightarrow V(s) - \mathbb{E}_{s'\in\kernel(\cdot \mid s)} \in Z$, we generalize it to the setting of parametric MDPs as follows depending on whether we are interested in angelic satisfaction or demonic satisfaction:
	\begin{compactenum}
		\item {\em Angelic satisfaction.} If $\diamondsuit = a$, then the expectation condition is generalized to the setting of parametric MDPs as follows: We require that there exists a parametric strategy $\sigma: \ParamSet \times \States \rightarrow \Actions$ such that, for every parameter valuation $u \in \ParamSet$ and for every state $s \in I$, if the premise of the entailment holds then the conclusion of the entailment holds, i.e.
		\begin{equation*}
		\begin{split}
			\exists \strategy \in \StrategySpace^{\MDP}.\,\,\, \forall u\in\ParamSet.\,\,\, \forall s \in I.\,\,\, &s \in X \land V(s) \in Y \\
			&\Rightarrow V(s) - \mathbb{E}_{s'\in\kernel(\cdot \mid u,s,\sigma[u](s))} \in Z
		\end{split}
		\end{equation*}
		\item {\em Demonic satisfaction.} If $\diamondsuit = d$, then the expectation condition is generalized to the setting of parametric MDPs as follows: We require that for all parametric strategies $\sigma: \ParamSet \times \States \rightarrow \Actions$, for every parameter valuation $u \in \ParamSet$ and for every state $s \in I$, if the premise of the entailment holds then the conclusion of the entailment holds, i.e.
		\begin{equation*}
			\begin{split}
				\forall \strategy \in \StrategySpace^{\MDP}.\,\,\, \forall u\in\ParamSet.\,\,\, \forall s \in I.\,\,\, &s \in X \land V(s) \in Y \\
				&\Rightarrow V(s) - \mathbb{E}_{s'\in\kernel(\cdot \mid u,s,\sigma[u](s))} \in Z
			\end{split}
		\end{equation*}
	\end{compactenum}
\end{compactenum}
Note that the defining conditions of parametric RSMs in Definition~\ref{def:prsm}, parametric SBFs in Definition~\ref{def:psbf} and parametric RASMs in Definition~\ref{def:prasm} are obtained precisely by applying these two parametric generalizations to the defining conditions of RSMs~\cite{ChakarovS13}, SBFs~\cite{PrajnaJP07} and RASMs~\cite{ZikelicLHC23}. It can also be analogously applied to obtain the parametric generalization of any other supermartingale certificate for Markov chains whose defining conditions are all either bound or expectation conditions. The proof of generalization soundness follows analogously as the proofs of Theorem~\ref{thm:prsm}, Theorem~\ref{thm:psbf} and Theorem~\ref{thm:prasm}, which all proceed by showing that the parametric supermartingale certificate induces a non-parametric supermartingale certificate in the parameter flattened MDP and then using the soundness of the non-parametric supermartingale certificate for proving the specification of interest, while taking additional care to correctly handle the cases of angelic and demonic satisfaction.

\section{Verification Algorithm}\label{app:algoverification}

Our verification algorithm follows a template-based synthesis approach and reduces the computation of the parametric supermartingale certificate $V$ and the invariant $I$ to solving a system of polynomial real constraints. This is a standard procedure for synthesizing (non-parametric) supermartingale certificates, see e.g.~\cite{ChatterjeeFG16,ChatterjeeGMZ22,AbateGR25,HenzingerMSZ25}. The algorithm proceeds in three steps.

\smallskip\noindent{\bf Step 1: Fixing templates.} The algorithm fixes symbolic polynomial templates for the parametric supermartingale certificate $V: \States \times \ParamSet \rightarrow \mathbb{R}$ and the invariant $I \subseteq \States$. In the case of angelic satisfaction, i.e.~$\diamondsuit = a$, the algorithm also fixes a template for the existentially quantified parametric strategy $\sigma: \States \times \ParamSet \rightarrow \Actions$.

For the parametric supermartingale certificate $V$, the symbolic polynomial template is defined by a polynomial expression $P_V(s_1,\dots,s_{n_\States},u_1,\dots,u_n)$ over the state variables and parameters of polynomial degree $D$, which is an algorithm parameter. The values of coefficients of each monomial are at this stage unknown and are given by the {\em symbolic template variables}, whose concrete values will be computed at the later stages of the algorithm.

For the invariant $I$, since it is a subset of the state space, we define its symbolic polynomial template to be the satisfiability set of a conjunction of finitely many symbolic polynomial inequalities $(P^1_I(s)) \geq 0 \land \dots \land (P^{N_I}_I(s) \geq 0)$, where each $P^j_I(s_1,\dots,s_{n_\States})$ is a polynomial over the state variables of polynomial degree $D$, while the number $N_I$ of inequalities is another algorithm parameter.

Finally, if $\diamondsuit = a$, the symbolic polynomial template for the existentially quantified parametric strategy $\sigma$ is a symbolic polynomial $P_\sigma(s_1,\dots,s_{n_\States},u_1,\dots,u_n)$ over the state variables and parameters of polynomial degree $D$.

\begin{example}
	Suppose that the dimension of the state space is $p=1$ and the number of parameters is $n=2$, as in Example~\ref{ex:running}. Suppose that the polynomial degree parameter is $D=1$ and $N_I = 1$. The symbolic polynomial template of $V$ is a degree $1$ polynomial expression defined by $P_V(s_1,u_1,u_2) = \alpha_1 \cdot s_1 + \alpha_2 \cdot u_1 + \alpha_3 \cdot u_2 + \alpha_4$, where $\alpha_1,\dots,\alpha_4$ are the symbolic template variables. The symbolic polynomial template for $I$ is $\{s \in \States \mid P_I^1(s_1) \geq 0\}$ with $P_I^1(s_1) = \beta_1 \cdot s_1 + \beta_2$, where $\beta_1, \beta_2$ are the symbolic template variables. If $\diamondsuit = a$, the symbolic polynomial template for $\sigma$ is $P_\sigma(s_1,u_1,u_2) = \gamma_1 \cdot s_1 + \gamma_2 \cdot u_1 + \gamma_3 \cdot u_2 + \gamma_4$, where $\gamma_1,\dots,\gamma_4$ are the symbolic template variables.
\end{example}

\smallskip\noindent{\bf Step 2: Constraint collection.} The algorithm collects a system of polynomial constraints over the symbolic template variables which together encode that all defining conditions of the parametric supermartingale certificate $V$ are satisfied, as well as that the invariant $I$ is an inductive invariant. This is achieved by substituting the polynomial expressions that define the symbolic polynomial templates in Step~1 into the defining conditions of parametric supermartingale certificates and inductive invariants. For the case of angelic satisfaction, i.e.~$\diamondsuit = a$, we remove the existential quantification over the parametric strategy and instead substitute the symbolic polynomial template $P_\sigma(s,u)$ into the parametric stochastic kernel. For the case of demonic satisfaction, i.e.~$\diamondsuit = d$, we remove the universal quantification over the parametric strategy and replace it with $\forall a \in \Actions$, which slightly strengthens the parametric supermartingale certificate conditions and thus preserves soundness of our algorithm. Since constraint collection is standard and analogous to existing template-based synthesis algorithms for non-parametric supermartingale certificates~\cite{ChatterjeeFG16,ChatterjeeGMZ22,AbateGR25,HenzingerMSZ25}, we omit the details.

\smallskip\noindent{\bf Step 3: Constraint solving.} The previous step results in a system of {\em quantified polynomial entailments}, i.e.~constraints of the form
\[ \forall a \in \Actions.\,\,\, \forall u \in \ParamSet.\,\,\, \forall s \in I.\,\,\, \Phi(s,u) \Longrightarrow \Psi(s,u,a), \]
where $\Phi$ and $\Psi$ are boolean combinations of symbolic polynomial inequalities over the state, parameter and action variables, whose coefficients are either concrete real values or symbolic template variables introduced in Step~1. Our algorithm now solves this system of constraints as follows. First, by using Putinar's theorem~\cite{putinar1993positive} (or Handelman's theorem~\cite{handelman1988representing}, if all polynomials are of degree $1$), each quantified polynomial entailment is translated into a system of purely existentially quantified polynomial constraints over the symbolic template variables, as well as fresh variables introduced by the Putinar’s theorem. This step is again standard so we omit the details and refer the reader to~\cite{ChatterjeeFG16,ChatterjeeGMZ22,AbateGR25,HenzingerMSZ25}. The resulting system is then solved by an SMT solver. This translation and reduction to SMT solving can be achieved via an off-the-shelf PolyQEnt tool~\cite{ChatterjeeGGKSSZ25}, as we do in our implementation, which also includes details on how boolean combinations of polynomial inequalities are handled before applying Putinar's theorem.

The following theorem establishes soundness of our algorithm. The proof is immediate since the system of constraints collected in Step~2 entails that $V$ is a valid parametric supermartingale certificate with the supporting invariant $I$, and the reduction to SMT solving in Step~3 is sound.

\begin{theorem}[Soundness]\label{thm:verification}
	Suppose that the verification algorithm returns a parametric supermartingale certificate $V$ and an invariant $I$. Then $M \models_\diamondsuit \phi,p$.
\end{theorem}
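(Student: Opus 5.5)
The plan is to show that any solution returned in Step~3 yields concrete $V$, $I$ (and, for $\diamondsuit = a$, a parametric strategy $\sigma$) that satisfy every defining condition of the chosen parametric supermartingale certificate. We then invoke the corresponding soundness theorem: Theorem~\ref{thm:psbf} for parametric SBFs, Theorem~\ref{thm:prasm} for parametric RASMs, Theorem~\ref{thm:prsm} for parametric RSMs, or the analogous generalization in Appendix~\ref{app:otherspecifications}. Concretely, we fix a model of the SMT system and substitute its values for the symbolic template variables in $P_V$, the $P_I^j$ and $P_\sigma$. This gives a polynomial $V$, a semi-algebraic set $I$, and a polynomial $\sigma$. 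Polynomials are continuous, hence measurable, so $V$ is measurable and $I \in \StatesSA$. For $\sigma$ to be a parametric strategy, the constraint system must include the requirement $P_\sigma(s,u) \in \Actions$ for all $s \in I$, $u \in \ParamSet$; otherwise we compose with a measurable projection onto $\Actions$ outside $I$, which does not affect the conditions since they only quantify over $I$.

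The next step is the soundness of the reduction in Step~3. The direction we need is the easy one of Putinar's (resp.\ Handelman's) theorem: if a polynomial is written as a sum-of-squares combination of the constraint polynomials (resp.\ a nonnegative combination of products of them), then it is nonnegative on their satisfiability set. Hence every existential solution entails the original quantified polynomial entailments $\forall a\,\forall u\,\forall s.\ \Phi \Rightarrow \Psi$. The handling of boolean combinations and strict inequalities in PolyQEnt~\cite{ChatterjeeGGKSSZ25} is likewise sound in this direction, so we can cite it.

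Next we check that the entailments imply the definitional conditions. Bound conditions are transcribed verbatim. Inductiveness of $I$, together with $\Init \subseteq I$, gives that $I$ contains all reachable states for every $u$ and every strategy, so $I$ is an invariant. For expectation conditions, the expectation $\mathbb{E}_{s'}[V(s',u)]$ of a polynomial in $s' = \kernel(u,s,a,r)$ expands into a polynomial in $u,s,a$ whose coefficients are moments of the independent $r_i$ up to degree $D$. These moments are assumed known, so the collected constraint is exactly the expectation condition. In the angelic case we substitute $a = \sigma[u](s)$, and the fixed $\sigma$ witnesses the existential quantifier. In the demonic case the constraint holds for all $a \in \Actions$, so it holds in particular for $a = \sigma[u](s)$ for any parametric strategy $\sigma$, which gives the universally quantified condition. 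Any $\epsilon > 0$ is fixed by the solution.

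The main obstacle is ensuring the constraints genuinely encode invariance and the angelic strategy's range, and that moment substitution is exact rather than approximate. I would state these explicitly as part of the Step~2 encoding. Once they are in place, the defining conditions hold, the certificate soundness theorem yields $\tilde{\MDP} \models_\diamondsuit \tilde\phi, p$, and Theorem~\ref{thm:transformation} yields $\MDP \models_\diamondsuit \phi, p$.
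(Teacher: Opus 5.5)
Your proposal is correct and follows the paper's own one-line argument. The Step~2 constraints entail that $V$ is a valid parametric certificate with supporting invariant $I$; in the demonic case the $\forall a \in \Actions$ encoding strengthens the $\forall \sigma$ quantifier. The Step~3 reduction is sound by the easy direction of Putinar/Handelman, and the certificate soundness theorems then give the claim.

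You mainly make explicit what the paper leaves implicit: measurability, the requirement that $P_\sigma$ take values in $\Actions$, and exactness of the moment encoding. Your closing detour through Theorem~\ref{thm:transformation} is harmless but unnecessary, since Theorems~\ref{thm:psbf}, \ref{thm:prasm} and~\ref{thm:prsm} already conclude about $\MDP$ directly.
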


\section{Full Experimental Results Table (including Variance in Runtime)}
\label{app:full table}
\begin{table}[!t]
	\vspace{-0.5cm}
	\caption{\textit{Extended version of Table~\ref{table:performance} with variance of
	computation times.} 
	Results for different values of the approximation parameter
	$c$. Total time represents the total time the tool needed to reach the
	specified approximation level. Angelic and Demonic SAT time respective
	represent the mean and standard deviation of the time of those individual
	SMT queries that returned ``SAT.'' TO means timeout of the total runtime,
	which was set to be 900 \unit{\second}. While reporting the stats of the SMT
	calls, the remaining cases (total $-$ (SAT $+$ UNSAT $+$ inconclusive)) are
	timeouts (25 \unit{\second}).}
	\label{table:performance:full}
	\renewcommand{\arraystretch}{1.5} 
	\centering
	\Rotatebox{90}{%
	\begin{tabular}{
        |>{\centering\arraybackslash}p{2.5cm}
        |>{\centering\arraybackslash}p{1cm}
        |>{\centering\arraybackslash}p{1.6cm}
        |>{\centering\arraybackslash}p{1.6cm}
        |>{\centering\arraybackslash}p{1.6cm}
        |>{\centering\arraybackslash}p{1.6cm}
        |>{\centering\arraybackslash}p{1.6cm}
        |>{\centering\arraybackslash}p{1.6cm}
		|>{\centering\arraybackslash}p{2.7cm}
        |>{\centering\arraybackslash}p{2.2cm}|}
		\toprule
		&	\multirow{2}{*}{$c$}		&	\multicolumn{2}{c|}{Total time
		(\unit{\second})}	&	\multicolumn{2}{c|}{Angelic SAT time
		(\unit{\second})}	&	\multicolumn{2}{c|}{Demonic SAT time
		(\unit{\second})} &	\multicolumn{2}{c|}{\shortstack{\# SMT calls and
		their outcome \\ total-SAT-UNSAT-inconclusive }}\\
		\cline{3-10}
		&	&	Z3	&	MathSAT	&	Z3	&	MathSAT &	Z3	&	MathSAT & Z3 & MathSAT \\
		\midrule
		\multirow{5}{1.5cm}{\centering$M^{+}$: $u_1=0$, $u_2\in [-1,1]$,
		template unknowns = 9}	&
		$0.4$	&	138.48	& 45.51		&	$ 8.16 \pm 0.00 $	&	$ 4.92 \pm 0.00 $	 &	$ 0.19 \pm 0.04 $ & $
		0.35 \pm 0.07 $ & 22192-18-22112-17 & 22-3-0-13\\
		& $0.3$	&	310.30	& 45.62		&	$ 6.04 \pm 3.01 $	&	$ 2.67 \pm 3.19 $ &	$ 0.19 \pm 0.04 $	& $ 0.35
		\pm 0.07 $	 & 52400-20-52320-20 & 22-4-0-14\\
		& $0.2$	&	TO	&	60.49	&	-	&	$ 2.67 \pm 3.19 $ &	-	& $ 0.30
		\pm 0.10 $	 & - & 30-5-0-19\\
		& $0.15$	&	TO	&	90.90	&	-	&	$ 1.52 \pm 2.27 $ &	-	&$
		0.30 \pm 0.10 $	 & - & 62-7-0-37\\
		& $0.125$	&	TO	&	TO	&	-	& -	 &	-	& -	 & - & -\\
		\midrule	
		\multirow{5}{1.5cm}{$M^{\times}$: $u_1\in [-1,1]$, $u_2=1$,
		template unknowns = 9}	& $0.4$	&
		57.47	&	75.38	&	$ 11.54 \pm 0.73 $	& $ 13.69 \pm 0.67 $	 &
		$ 7.11 \pm 0.00 $	& $ 11.75 \pm 0.00 $	 & 22-3-8-8 & 38-4-0-22 \\
		& $0.3$	&	59.14	&	89.75	&	$ 11.88 \pm 0.78 $	& $ 10.90 \pm
		4.77 $	 &	$ 7.11 \pm 0.00 $	& $ 11.75 \pm 0.00 $	 & 22-4-8-8 & 62-7-0-37\\
		& $0.2$	&	73.87	&	117.58	&	$ 11.90 \pm 0.64 $	& $ 8.87 \pm
		5.45 $	 &	$ 7.11 \pm 0.00 $	& $ 8.68 \pm 3.30 $ & 34-5-13-10	 & 154-14-0-90\\
		& $0.15$	&	83.27	&	136.08	&	$ 9.72 \pm 4.92 $	& $ 9.17 \pm
		5.14 $	 &	$ 6.76 \pm 0.49 $	& $ 8.29 \pm 2.63 $ & 46-7-18-14	 & 218-21-0-137\\
		& $0.125$	&	83.32	&	153.65	&	$ 9.72 \pm 4.92 $	&$ 8.07 \pm
		5.16 $	 &	$ 6.76 \pm 0.49 $	& $ 8.22 \pm 2.41 $ & 46-7-18-14	 & 298-31-0-209\\
		\midrule	
		\multirow{5}{1.5cm}{$M^{+,\times}$: $u_1=0$, $u_2\in [-1,1]$,
		template unknowns = 11 }	& $0.4$	&	214.01	&	76.56	&	$ 1.60
		\pm 0.75 $	& $ 0.61 \pm 0.32 $	 &	 $ 0.23 \pm 0.05 $	& $ 0.57 \pm
		0.47 $	 & 1568-28-1183-325 & 86-10-0-52\\
		& $0.3$	&	TO	&	91.74	&	-	& $ 0.64 \pm 0.29 $	 &	-	& $ 0.61
		\pm 0.39 $	 & - & 134-15-0-81\\
		& $0.2$	&	TO	&	122.26	&	-	&  $ 1.37 \pm 1.15 $	 &	-	& $
		0.72 \pm 0.40 $	 & - & 270-27-0-181\\
		& $0.15$	&	TO	&	144.18	&	-	&	$ 2.47 \pm 2.20 $ &	-	& $
		0.92 \pm 0.52 $	 & - & 360-42-0-256\\
		& $0.125$	&	TO	&	177.40	&	-	&	$ 3.43 \pm 3.44 $ &	-	&
		$ 2.03 \pm 2.06 $ & - & 530-65-0-403\\
		\bottomrule	
	\end{tabular}
	}%
	\vspace{-1em}
\end{table}

\end{document}